\documentclass[journal]{IEEEtran}
\usepackage{amsmath,amssymb,amsfonts,amsthm,amscd}
\usepackage{graphicx}
\usepackage{booktabs}
\usepackage{xspace}
\usepackage{tikz}
\usetikzlibrary{automata,positioning,arrows.meta}
\usepackage{url}
\usepackage{microtype}

\theoremstyle{plain}
\newtheorem{theorem}{Theorem}
\newtheorem{proposition}[theorem]{Proposition}
\newtheorem{lemma}[theorem]{Lemma}
\newtheorem{corollary}[theorem]{Corollary}
\theoremstyle{definition}
\newtheorem{definition}[theorem]{Definition}
\newtheorem{problem}[theorem]{Problem}
\theoremstyle{remark}
\newtheorem{example}[theorem]{Example}
\newtheorem{remark}[theorem]{Remark}

\newcommand{\So}{\Sigma_{o}}
\newcommand{\Suo}{\Sigma_{uo}}
\newcommand{\To}{T_{o}}
\newcommand{\Tuo}{T_{uo}}
\newcommand{\PT}{P_{T}}
\newcommand{\Ro}{R_{o}}
\DeclareMathOperator{\pair}{pr}
\newcommand{\Jm}{\mathcal{J}}
\newcommand{\Fm}{\mathcal{F}}
\newcommand{\PTIME}{\textsc{P}\xspace}
\newcommand{\PSPACE}{\textsc{PSpace}\xspace}
\newcommand{\BPP}{\textsc{BPP}\xspace}
\newcommand{\QEDbox}{\hfill$\square$}

\begin{document}

\title{Deciding Relabeling Observation Consistency\\ in Multi-Agent Discrete-Event Systems}

\author{Tom\'{a}\v{s}~Masopust%
\thanks{T. Masopust is with the Faculty of Science, Palack\'y University Olomouc, Czechia (e-mail: tomas.masopust@upol.cz).}}

\maketitle

\begin{abstract}
Scalable supervisors for multi-agent discrete-event systems control groups of isomorphic agents through a common template. Under partial observation, such a supervisor is maximally permissive if the relabeling that maps the agents onto their template is \emph{relabeling observation consistent} (ROC) and a local companion condition holds. Whether ROC is decidable was open. We show that it is \textsc{PSpace}-complete: for nondeterministic plants with two observable and one unobservable event, and for deterministic plants with two observable and two unobservable events. On the tractable side, we characterize the relabelings that guarantee ROC for every plant, we give a polynomial-time algorithm for deterministic plants whose relabeling is injective on unobservable events, and we give sufficient conditions based on saturation and on simulation, with polynomial-time tests for both. We further show that ROC is compositional for components with pairwise disjoint templates, with equivalence if the alphabets are moreover pairwise disjoint, and that neither hypothesis can be dropped. Finally, we show that the structural condition proposed in the literature to guarantee ROC is incorrect, and repair the companion inclusion that the same proposition asserts.
\end{abstract}

\begin{IEEEkeywords}
Discrete-event systems, Multi-agent systems, Supervisory control, Partial observation, Maximal permissiveness, Computational complexity
\end{IEEEkeywords}

\section{Introduction}\label{sec:intro}
	In multi-agent discrete-event systems (DES)~\cite{LCL19,LKL22}, several groups of isomorphic agents---machines, robots, or vehicles instantiated from a common template---run in parallel. Such systems arise across manufacturing and logistics, for instance as machines grouped by the type of workpiece they process, or as automated guided vehicles transporting items of distinct kinds~\cite{LCL19}. Their defining feature is that the number of agents is not fixed a~priori and changes over time, as units are added to raise throughput or removed after a fault; a monolithic supervisor, whose state space grows with the number of agents, has to be recomputed after every such change.

	Template- and symmetry-based approaches exploit the structural similarity of agents and include the verification and control of interacting agents~\cite{RL06,LM26}, broadcasting-based composition~\cite{Su13}, control-protocol synthesis from agent and requirement templates~\cite{SL17}, symmetry reduction by state-tree structures and by relabeling and reconfiguration~\cite{JGXW17,JGXW20}, and modular synthesis by similarity~\cite{LKML22}; see~\cite{FG24} for a recent survey.

	Among these, the framework of Liu \emph{et al.}~\cite{LKL22,LCL19} yields a scalable supervisor under partial observation whose state space and computational cost are independent of the number of agents. A \emph{relabeling} $R$ maps the events of isomorphic agents onto common \emph{template events}, collapsing each group onto one generator; for instance, two events $a_1,a_2$ of two isomorphic machines are instantiations of a single template event $\hat a=R(a_1)=R(a_2)$. The supervisor is synthesized for a \emph{template plant}, assembled from a fixed number of agents of each group, and is inverse-relabeled back to the plant level. Distinct groups have disjoint alphabets and disjoint template alphabets, and therefore the collapse happens inside the groups; it is what makes $R$ noninjective, and it is the source of the difficulties studied here.

	Safety of the scalable supervisor is guaranteed by relative observability~\cite{CZW15,ACB17} on the template, but its permissiveness is not automatic: the supervisor is synthesized against the \emph{template} rather than against the relabeled plant. Maximal permissiveness---the guarantee that the supervisor is as permissive as the monolithic one---requires, in addition to a local companion condition, that the relabeling be \emph{relabeling observation consistent} (ROC) with respect to the plant and the natural projection~\cite{LKL22}. Informally, ROC requires that whenever a template string looks, on the observable level, like the relabeling of a plant string, the plant contains a string with that relabeling and the same observation; the supervisor then loses nothing by reasoning on the template.

	The question of whether ROC is decidable was open. On the practical side, the only structural condition proposed so far asserts that ROC holds whenever the agents within each group have pairwise disjoint alphabets~\cite{LKL22,LKL21}; this assertion fails, as we show in Example~\ref{ex:sef}.

\subsubsection*{Contributions}
	We settle the complexity of ROC verification: it is \PSPACE-complete (Theorem~\ref{thm:complete}), already for automata with two observable events and one unobservable event (Lemma~\ref{lem:nfa-hard}), and for deterministic automata with two observable and two unobservable events (Lemma~\ref{lem:dfa-hard}). Decidability follows from an upper bound (Theorem~\ref{thm:upper}) obtained by encoding the pairs that ROC quantifies over as words over a \emph{pairing alphabet}: the encoding is a bijection (Lemma~\ref{lem:pairs}), and ROC becomes an inclusion between two nondeterministic automata with $O(n^2)$ states.

	On the tractable side, injectivity of the relabeling on \emph{observable} events, together with the cases $\So=\emptyset$ and $\Suo=\emptyset$, characterizes the situations in which ROC holds for every language (Proposition~\ref{prop:trivial-roc}); a strictly weaker sufficient condition is \emph{saturation}, the closedness of the plant under exchanging observable events with a common template image (Proposition~\ref{prop:sat}); and for deterministic automata with a relabeling injective on \emph{unobservable} events, ROC is decidable in polynomial time (Proposition~\ref{prop:ptime}). We further show that ROC is compositional: for components with pairwise disjoint templates, ROC of all components implies ROC of the composition, and if the alphabets are moreover pairwise disjoint, the two are equivalent (Proposition~\ref{prop:comp}); neither hypothesis can be dropped.

	Finally, we return to the structural condition of~\cite[Prop.~2]{LKL22}. Its first assertion is incorrect, and we refute it by a two-machine counterexample. Its second assertion, the inclusion $L(\mathbf{M})\subseteq R(L(\mathbf{G}))$, survives, but its published proof lets a noninjective morphism commute with intersection, which holds only as an inclusion; we prove the assertion and make the hypotheses it needs explicit (Corollary~\ref{cor:LMinRLG}).

\subsubsection*{Related conditions}
	ROC has a hierarchical counterpart. In hierarchical supervisory control, the abstraction is a projection onto a high-level alphabet rather than a relabeling, and the corresponding conditions are observation consistency and modified observation consistency~\cite{BKMSS11,KM20,KM23}. Their verification behaves entirely differently: it is undecidable~\cite{MOC}. The local companion conditions of both frameworks are \PSPACE-complete for nondeterministic plants and decidable in polynomial time for deterministic ones~\cite{LOCPAPER}. 

\section{Preliminaries}\label{sec:prelim}
	We assume familiarity with the basics of formal languages, supervisory control~\cite{CassandrasLafortune08,RamadgeWonham87}, and complexity theory~\cite{AB09}; \PTIME{} and \PSPACE{} denote the classes of problems decidable in polynomial time and in polynomial space.

	An \emph{alphabet} is a finite nonempty set of events. For an alphabet $\Sigma$, the set of all finite strings over $\Sigma$ is denoted by $\Sigma^*$, and the empty string by $\varepsilon$; for $w\in\Sigma^*$, $|w|$ denotes its length. We write $A\mathbin{\dot\cup}B$ for the union of disjoint sets $A$ and $B$. The \emph{prefix closure} of $L$ is $\overline{L}=\{u : uv\in L \text{ for some } v\}$, and $L$ is \emph{prefix-closed} if $L=\overline L$.

	A \emph{nondeterministic finite automaton} (NFA) is a structure $G=(Q,\Sigma,\delta,I,F)$, where $Q$ is a finite set of states, $I\subseteq Q$ is a set of initial states, $F\subseteq Q$ is a set of \emph{final} states, and $\delta\colon Q\times\Sigma\to2^Q$ is a transition function, extended to $2^Q\times\Sigma^*$ as usual; since every function is a relation, we also write $\delta\subseteq Q\times\Sigma\times Q$. The automaton is \emph{deterministic} (DFA) if $|I|=1$ and $|\delta(q,a)|\le1$ for all $q\in Q$ and $a\in\Sigma$. Its language is $L(G)=\{w : \delta(I,w)\cap F\neq\emptyset\}$.

	Every plant in this paper is prefix-closed, and we take $F=Q$, writing \emph{NFA with all states final}~\cite{KRS09}. The languages of such automata are the prefix-closed regular languages, and we omit $F$ from the tuple when it equals $Q$. Automata with $F\neq Q$ arise only as auxiliary constructions inside proofs.

	For alphabets $\Sigma_1\subseteq\Sigma$, the \emph{(natural) projection} $p\colon\Sigma^*\to\Sigma_1^*$ is the morphism with $p(a)=a$ for $a\in\Sigma_1$ and $p(a)=\varepsilon$ otherwise; we also write $s|_{\Sigma_1}$ for $p(s)$. For a morphism $f$ and a language $K$, the \emph{inverse image} is $f^{-1}(K)=\{s : f(s)\in K\}$. Languages $L_i\subseteq\Sigma_i^*$ are composed by the \emph{synchronous product}
	\(
		\|_{i=1}^{n} L_i = \{ s \in (\bigcup_i \Sigma_i)^* : s|_{\Sigma_i}\in L_i \text{ for all } i\}
	\);
	if the alphabets are pairwise disjoint, the product is called the \emph{shuffle}.

	Under partial observation, the alphabet is partitioned as $\Sigma=\So\mathbin{\dot\cup}\Suo$ into observable and unobservable events, and $P\colon\Sigma^*\to\So^*$ denotes the \emph{observation}.

\begin{definition}\label{def:relabeling}
	Let $T=\To\mathbin{\dot\cup}\Tuo$ be a \emph{template alphabet} disjoint from $\Sigma$. A \emph{relabeling} is a surjective map $R\colon\Sigma\to T$ that \emph{preserves the observability status of events}, that is, $R(\So)\subseteq\To$ and $R(\Suo)\subseteq\Tuo$, extended to a morphism $R\colon\Sigma^*\to T^*$.
\end{definition}

	A relabeling is \emph{letter-to-letter}: $|R(s)|=|s|$ for every $s$, and no event is erased. In general, $R$ is not injective; we say that $R$ \emph{merges} $a\neq b$ if $R(a)=R(b)$, and that $R$ is \emph{injective on} $\Sigma'\subseteq\Sigma$ if its restriction to $\Sigma'$ is injective. We write $\PT\colon T^*\to\To^*$ for the observation on the template side and $\Ro\colon\So^*\to\To^*$ for the morphism induced by the restriction of $R$ to $\So$. Since morphisms agreeing on letters coincide, preservation of the observability status ensures that the diagram of Fig.~\ref{fig:maps} commutes:
	\begin{equation}\label{eq:comm}
		\PT\circ R \;=\; \Ro\circ P \,.
	\end{equation}

  \begin{figure}
		\centering
		\[
		\begin{CD}
			\Sigma^* @>{R}>> T^* \\
			@V{P}VV @VV{\PT}V \\
			\So^* @>>{\Ro}> \To^*
		\end{CD}
		\]
		\caption{Relabelings act horizontally, observations vertically. The diagram commutes because $R$ preserves the observability status of events.}
		\label{fig:maps}
  \end{figure}

\begin{definition}[ROC~{\cite[Def.~1]{LKL22}}]\label{def:roc}
	A relabeling $R$ is \emph{relabeling observation consistent} (ROC) with respect to a prefix-closed language $L\subseteq\Sigma^*$ and the observation $P$ if for every $s\in L$ and every $t'\in R(L)$ with $\PT(R(s))=\PT(t')$ there exists $s'\in L$ with $R(s')=t'$ and $P(s')=P(s)$.
\end{definition}

	Every plant string $s$ must thus be matched, on the observation level, by a witness of every template string that the template-level observer cannot distinguish from $R(s)$.

\begin{problem}[ROC verification]\label{prob:roc}
	Given an NFA $G$ over $\Sigma$, a partition $\Sigma=\So\mathbin{\dot\cup}\Suo$, and a relabeling $R$, decide whether $R$ is ROC with respect to $L(G)$ and $P$.
\end{problem}

\section{Compatible and Realized Pairs}\label{sec:pairs}
	Definition~\ref{def:roc} quantifies over a plant string and a template string. It is convenient to replace the plant string by its observation, which is all that the definition uses.

	Define the set of \emph{realized} pairs and the set of \emph{compatible} pairs,
	\begin{align*}
		\Jm(L) &:= \{ (R(z),\, P(z)) : z\in L \}\ \subseteq\ T^*\times\So^*\,, \\
		\Fm(L) &:= \{ (t',x)\in R(L)\times P(L) : \PT(t')=\Ro(x) \}\,,
	\end{align*}
	and say that $z\in L$ \emph{realizes} the pair $(R(z),P(z))$. By~\eqref{eq:comm}, $\Jm(L)\subseteq\Fm(L)$: a realized pair is compatible. The set $\Fm(L)$ is the largest relation that $\Jm(L)$ could conceivably be, as it collects the pairs that the commuting diagram of Fig.~\ref{fig:maps} does not rule out.

\begin{lemma}\label{lem:pair}
	A relabeling $R$ is ROC with respect to a prefix-closed language $L$ and $P$ if and only if $\Jm(L)=\Fm(L)$, that is, if and only if every compatible pair is realized.
\end{lemma}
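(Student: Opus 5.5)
Since $\Jm(L)\subseteq\Fm(L)$ always holds by~\eqref{eq:comm}, the statement reduces to showing that ROC holds if and only if $\Fm(L)\subseteq\Jm(L)$. I will prove the two implications separately. Each is a direct translation between the quantifiers of Definition~\ref{def:roc} and membership in the two sets. The only nontrivial ingredient is the commuting diagram~\eqref{eq:comm}, which converts between $\PT(R(s))$ and $\Ro(P(s))$.

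For ROC $\Rightarrow$ $\Fm(L)\subseteq\Jm(L)$, take $(t',x)\in\Fm(L)$. By definition $t'\in R(L)$, and $x\in P(L)$, so I choose $s\in L$ with $P(s)=x$. Then
\[
\PT(R(s))=\Ro(P(s))=\Ro(x)=\PT(t')
\]
by~\eqref{eq:comm} and compatibility. ROC therefore yields $s'\in L$ with $R(s')=t'$ and $P(s')=P(s)=x$. Hence $s'$ realizes $(t',x)$, and $(t',x)\in\Jm(L)$.

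For the converse, assume $\Jm(L)=\Fm(L)$. Let $s\in L$ and $t'\in R(L)$ with $\PT(R(s))=\PT(t')$, and set $x=P(s)\in P(L)$. Again by~\eqref{eq:comm}, $\Ro(x)=\PT(R(s))=\PT(t')$, so $(t',x)\in\Fm(L)=\Jm(L)$. Thus some $s'\in L$ satisfies $R(s')=t'$ and $P(s')=x=P(s)$, which is exactly the witness that ROC demands.

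No step is a real obstacle. The one point that needs care is the first direction: there, one must remember that $x\in P(L)$ has to be lifted to an actual plant string $s$ before ROC can be applied. This lift is possible precisely because $\Fm(L)$ requires $x\in P(L)$ rather than merely $x\in\So^*$. Prefix-closedness of $L$ plays no role in the argument; it is carried along only because Definition~\ref{def:roc} is stated for prefix-closed languages.
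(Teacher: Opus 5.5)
Your proposal is correct and follows the paper's proof essentially step for step: you use \eqref{eq:comm} for $\Jm(L)\subseteq\Fm(L)$, lift $x\in P(L)$ to some $s\in L$ before invoking ROC, and in the converse set $x=P(s)$ to obtain a compatible pair. Your side remark that prefix-closedness plays no role in the argument is also accurate.
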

\begin{proof}
	The inclusion $\Jm(L)\subseteq\Fm(L)$ follows by~\eqref{eq:comm}. Assume that $R$ is ROC and let $(t',x)\in\Fm(L)$. Let $s\in L$ with $P(s)=x$. Then $\PT(R(s))=\Ro(P(s))=\Ro(x)=\PT(t')$, and ROC provides $s'\in L$ with $R(s')=t'$ and $P(s')=P(s)=x$; hence $(t',x)\in\Jm(L)$.

	Conversely, assume $\Jm(L)=\Fm(L)$ and let $s\in L$ and $t'\in R(L)$ satisfy $\PT(R(s))=\PT(t')$. Setting $x=P(s)\in P(L)$, we get $\Ro(x)=\PT(R(s))=\PT(t')$ by~\eqref{eq:comm}, and hence $(t',x)\in\Fm(L)=\Jm(L)$; this is precisely the existence of $s'\in L$ with $P(s')=x=P(s)$ and $R(s')=t'$.
\end{proof}

	We first determine when the condition is vacuous. The following characterization sharpens the sufficient conditions used in the multi-agent literature, showing that they are not merely sufficient but exactly the trivializing cases.

\begin{proposition}\label{prop:trivial-roc}
	Let $R$ be a relabeling. Then $R$ is ROC with respect to every prefix-closed language $L\subseteq\Sigma^*$ and $P$ if and only if $\Suo=\emptyset$ or $R$ is injective on $\So$. In particular, this holds when $\So=\emptyset$.
\end{proposition}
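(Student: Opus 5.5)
We prove the two directions separately, using Lemma~\ref{lem:pair}. In each case we show that every compatible pair is realized, or we exhibit a compatible pair that is not.

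\emph{Sufficiency.} First let $\Suo=\emptyset$. Then $P$ is the identity, $\PT$ is the identity, and $\Ro=R$. A compatible pair $(t',x)$ has $x\in P(L)=L$ and $t'=\PT(t')=\Ro(x)=R(x)$. Hence $z=x\in L$ realizes it.

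Next let $R$ be injective on $\So$, so that $\Ro$ is an injective morphism (letter-to-letter and injective on letters). Take $(t',x)\in\Fm(L)$ and pick $z\in L$ with $R(z)=t'$. Then $\Ro(P(z))=\PT(R(z))=\PT(t')=\Ro(x)$ by~\eqref{eq:comm}. Injectivity of $\Ro$ gives $P(z)=x$, so $z$ realizes $(t',x)$.

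The case $\So=\emptyset$ is covered because $R$ is vacuously injective on $\emptyset$. Alternatively, $P(L)=\{\varepsilon\}$ and any preimage of $t'$ works.

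\emph{Necessity.} Assume $\Suo\neq\emptyset$ and that $R$ merges distinct $a,b\in\So$. Fix $u\in\Suo$. We need a prefix-closed $L$ on which ROC fails. The candidate is
\[
L=\overline{\{a,\,ub\}}=\{\varepsilon,a,u,ub\}.
\]
Take the compatible pair $(t',x)=(R(ub),a)$:
\begin{itemize}
\item $t'\in R(L)$ and $x=a=P(a)\in P(L)$;
\item $\PT(t')=R(b)=R(a)=\Ro(a)$, so the pair is compatible.
\end{itemize}
A realizing $z\in L$ would need $|z|=2$ and $R(z)=R(u)R(b)$. The only string of length two in $L$ is $ub$, and $P(ub)=b\neq a$. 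So the pair is not realized, and Lemma~\ref{lem:pair} shows that $R$ is not ROC with respect to $L$.

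The unobservable prefix $u$ is essential: with $L=\overline{\{a,b\}}$ the pair $(R(b),a)$ is realized by $a$ itself, since $R(a)=R(b)$. The main point to check is therefore that the letter-to-letter property of $R$ forces $|z|=2$. This uses the fact that relabelings never erase events, and it excludes all other potential witnesses.
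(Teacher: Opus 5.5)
Your proof is correct and follows the paper's: you use the same case split, the same injectivity argument via \eqref{eq:comm}, and the same counterexample language $\{\varepsilon,a,u,ub\}$, phrased through Lemma~\ref{lem:pair} instead of directly through the definition. The one difference is that your violating pair is the mirror image $(R(ub),a)$ of the paper's $(R(a),b)$, and both work; in the case $\Suo=\emptyset$ you could also state why $\PT$ is the identity, namely that surjectivity of $R$ forces $\Tuo=\emptyset$.
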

\begin{proof}
	Assume first $\Suo=\emptyset$. Then $\Tuo=\emptyset$ by surjectivity of $R$, so $\PT$ is the identity on $T^*$, and $R(z)=\Ro(P(z))$ for every $z$. Let $s\in L$ and $t'\in R(L)$ with $\PT(R(s))=\PT(t')$; this reads $R(s)=t'$, and $s'=s$ is a witness.

	Assume now that $R$ is injective on $\So$. Being letter-to-letter, $\Ro\colon\So^*\to\To^*$ is then injective. Let $s\in L$ and $t'\in R(L)$ with $\PT(R(s))=\PT(t')$, and choose $s'\in L$ with $R(s')=t'$. By~\eqref{eq:comm}, $\Ro(P(s'))=\PT(R(s'))=\PT(t')=\PT(R(s))=\Ro(P(s))$, and injectivity yields $P(s')=P(s)$. If $\So=\emptyset$, the restriction of $R$ to $\So$ is vacuously injective.

	Conversely, suppose $\Suo\neq\emptyset$ and that $R$ merges two observable events, say $R(a)=R(b) = \nu\in\To$ with $a\neq b$ in $\So$. Fix $u\in\Suo$ and put $\tau = R(u)\in\Tuo$. Consider the prefix-closed language $L=\{\varepsilon,a,u,ub\}$. Take $s=ub$ and $t'=\nu$. Then $t'=R(a)\in R(L)$ and $\PT(R(s))=\PT(\tau\nu)=\nu=\PT(t')$, so the pair is compatible. A witness $s'$ would satisfy $R(s')=\nu$ and $P(s')=P(ub)=b$; the only string of $L$ with $R$-image $\nu$ is $a$, and $P(a)=a\neq b$. Hence $R$ is not ROC with respect to $L$.
\end{proof}

	Injectivity of $R$ on $\So$ says that the observation is recoverable from the template; the relabeling may still merge unobservable events arbitrarily. In the multi-agent setting, ROC can thus fail only if the template identifies observable events of different agents, which has an immediate consequence for the hardness results below: every reduction must merge observable events.

\section{ROC Verification is \PSPACE-Complete}\label{sec:complete}

\subsection{The Upper Bound}\label{ssec:upper}
	Compatible pairs are pairs of strings of different lengths over different alphabets, and the first step is to encode such a pair as a single word. Put
	\[
		\Gamma \;:=\; \bigl(\Tuo\times\{\varepsilon\}\bigr) \;\mathbin{\dot\cup}\; \bigl\{(\tau,c)\in\To\times\So : R(c)=\tau\bigr\}\,,
	\]
	a finite alphabet of pairs, and let $\pi_1\colon\Gamma^*\to T^*$ and $\pi_2\colon\Gamma^*\to\So^*$ be the morphisms projecting onto the first and the second component, so that $\pi_1(\tau,\sigma)=\tau$ and $\pi_2(\tau,\sigma)=\sigma$. Write $\pair(w):=(\pi_1(w),\pi_2(w))$. Finally, let $h\colon\Sigma^*\to\Gamma^*$ be the morphism
	\[
		h(c) = \begin{cases}
			(R(c),\varepsilon) & \text{if } c\in\Suo,\\
			(R(c),c)           & \text{if } c\in\So .
		\end{cases}
	\]
	The map $h$ is well defined because $R$ preserves the observability status, and comparing on letters gives
	\begin{equation}\label{eq:hproj}
		\pi_1 h = R \qquad\text{and}\qquad \pi_2 h = P \,.
	\end{equation}

\begin{lemma}\label{lem:pairs}
	Let $Z = \{(t',x)\in T^*\times\So^* : \Ro(x)=\PT(t')\}$. Then $\pair$ is a bijection from $\Gamma^*$ onto $Z$.
\end{lemma}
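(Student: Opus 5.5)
We prove the three required properties of $\pair$ in turn: it maps into $Z$, it is injective, and it is surjective. Each is argued by induction on length, using that every letter of $\Gamma$ contributes exactly one template letter to $\pi_1$.

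\emph{Well-definedness.} We need $\pair(\Gamma^*)\subseteq Z$. It suffices to check that the two morphisms $\PT\circ\pi_1$ and $\Ro\circ\pi_2$ from $\Gamma^*$ to $\To^*$ coincide. Since both are morphisms, it is enough to compare them on letters. For $(\tau,\varepsilon)$ with $\tau\in\Tuo$, both sides give $\varepsilon$. For $(\tau,c)$ with $R(c)=\tau\in\To$, both sides give $\tau$. Hence $\PT(\pi_1(w))=\Ro(\pi_2(w))$ for every $w\in\Gamma^*$.

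\emph{Injectivity.} Suppose $\pair(w)=\pair(w')$. Since $|\pi_1(w)|=|w|$, the words have equal length. We argue by induction on this length, comparing first letters.
\begin{itemize}
\item The first letter of $\pi_1(w)=\pi_1(w')$ is the first component of both first letters, say $\tau$.
\item If $\tau\in\Tuo$, both first letters equal $(\tau,\varepsilon)$, since this is the only letter of $\Gamma$ with first component $\tau$.
\item If $\tau\in\To$, both first letters have the form $(\tau,c)$ and $(\tau,c')$. Because $w$ and $w'$ contain the same observable template letters in order, $c$ and $c'$ are the first letters of $\pi_2(w)=\pi_2(w')$, so $c=c'$.
\end{itemize}
After removing the first letters, the remaining words again have equal $\pair$-images, so the induction hypothesis gives $w=w'$. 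A cleaner way to phrase this step is that $\pi_2(w)$ lists, in order, the second components of exactly those letters of $w$ whose first component is observable.

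\emph{Surjectivity.} This is the main step. Given $(t',x)\in Z$ with $t'=\tau_1\cdots\tau_n$, we build $w=\gamma_1\cdots\gamma_n$ letter by letter. The $k$ observable letters of $t'$ are $\PT(t')=\Ro(x)$. Since $\Ro$ is letter-to-letter, $|x|=k$, and the $j$-th observable letter of $t'$ equals $R(x_j)$. We set $\gamma_i=(\tau_i,\varepsilon)$ if $\tau_i\in\Tuo$. If $\tau_i$ is the $j$-th observable letter of $t'$, we set $\gamma_i=(\tau_i,x_j)$; this lies in $\Gamma$ because $R(x_j)=\tau_i$. Then $\pi_1(w)=t'$ and $\pi_2(w)=x_1\cdots x_k=x$.

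The only delicate point in the whole argument is the bookkeeping that $|x|=|\PT(t')|$. This holds because $\Ro$ erases nothing, which in turn holds because $R$ is letter-to-letter and preserves observability status.
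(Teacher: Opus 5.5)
Your proof is correct and follows the paper's argument: you check letterwise that $\PT\circ\pi_1=\Ro\circ\pi_2$, you build the preimage for surjectivity by the same positional construction, and you get injectivity because $t'$ fixes the first components and $x$ forces the second components at the observable positions. The only difference is cosmetic: you phrase injectivity as an induction on first letters, whereas the paper argues directly that the constructed word is the unique preimage; your closing ``cleaner way to phrase this step'' is exactly the paper's formulation.
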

\begin{proof}
	That $\pair(w)\in Z$ for every $w\in\Gamma^*$ follows by applying $\Ro$ letterwise: a letter of $\To\times\So$ contributes a letter to both components with matching $R$-image, and a letter of $\Tuo\times\{\varepsilon\}$ contributes a $\Tuo$-letter to the first component only, which $\PT$ erases.

	For surjectivity, let $(t',x)\in Z$ and write $t'=\tau_1\cdots\tau_\ell$. Let $j_1<\cdots<j_k$ enumerate the positions with $\tau_j\in\To$, so that $\PT(t')=\tau_{j_1}\cdots\tau_{j_k}$. Since $\Ro(x)=\PT(t')$ and $\Ro$ is letter-to-letter, $x$ has length $k$ and its $r$-th letter $\sigma_r$ satisfies $R(\sigma_r)=\tau_{j_r}$. The word $w=\gamma_1\cdots\gamma_\ell$ with $\gamma_{j_r}=(\tau_{j_r},\sigma_r)$ and $\gamma_j=(\tau_j,\varepsilon)$ for $j\notin\{j_1,\dots,j_k\}$ is a word over $\Gamma$ with $\pair(w)=(t',x)$.

	For injectivity, note that this $w$ is the only preimage. A letter of $w$ contributes a $\To$-letter to $\pi_1(w)$ exactly when it lies in $\To\times\So$; hence the positions $j_1,\dots,j_k$ are determined by $t'$, and so are the first components of all letters. The second components are then forced: those at the positions $j_r$ are the letters of $x$ in order, and the remaining ones are $\varepsilon$.
\end{proof}

	The bijection is what makes the relabeling case tractable. Every compatible pair has exactly one code, so a set of pairs is a language, and comparing two sets of pairs is comparing two languages.

\begin{theorem}\label{thm:upper}
	ROC verification is in \PSPACE. Specifically, with
	\[
		W = \pi_1^{-1}(R(L))\cap\pi_2^{-1}(P(L))\ \subseteq\ \Gamma^*\,,
	\]
	$R$ is ROC with respect to $L$ and $P$ if and only if $W\subseteq h(L)$, and NFAs $\mathcal{A}_W$ and $\mathcal{A}_h$ for $W$ and $h(L)$ with $O(n^2)$ and $n$ states, respectively, are computable in polynomial time from an $n$-state NFA $G$ for $L$.
\end{theorem}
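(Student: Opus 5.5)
The plan has two parts. First, show that ROC is equivalent to $W\subseteq h(L)$ by combining Lemma~\ref{lem:pair} with the bijection of Lemma~\ref{lem:pairs}. Second, build the two automata and invoke the standard \PSPACE{} algorithm for NFA inclusion.

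For the equivalence, observe that under $\pair$ the set $W$ is the code of exactly $\Fm(L)$, and $h(L)$ is the code of exactly $\Jm(L)$.

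For $W$, a word $w\in\Gamma^*$ lies in $W$ iff $\pi_1(w)\in R(L)$ and $\pi_2(w)\in P(L)$. By Lemma~\ref{lem:pairs}, $\pair(w)\in Z$, so $\PT(\pi_1(w))=\Ro(\pi_2(w))$ holds automatically. Hence $\pair(W)=\Fm(L)$. Conversely, every element of $\Fm(L)$ lies in $Z$ and therefore has a preimage, which belongs to $W$.

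For $h(L)$, \eqref{eq:hproj} gives $\pair(h(z))=(R(z),P(z))$, so $\pair(h(L))=\Jm(L)$.

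Since $\pair$ is injective on $\Gamma^*$, the inclusion $W\subseteq h(L)$ is equivalent to $\Fm(L)\subseteq\Jm(L)$. Together with the inclusion $\Jm(L)\subseteq\Fm(L)$ from \eqref{eq:comm}, this is $\Jm(L)=\Fm(L)$, which is ROC by Lemma~\ref{lem:pair}.

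For the automata, $\mathcal{A}_h$ is $G$ with each transition $(q,c,q')$ relabeled to $(q,h(c),q')$. Since $h$ is letter-to-letter, this uses the same $n$ states.

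For $\mathcal{A}_W$, I would first build an NFA for $\pi_1^{-1}(R(L))$. Take $G$, relabel its transitions by $R$ to get an NFA for $R(L)$, and let a letter $\gamma\in\Gamma$ act as $\pi_1(\gamma)\in T$. Every $\gamma$ has a nonempty first component, so this automaton has $n$ states and no $\varepsilon$-moves. For $\pi_2^{-1}(P(L))$ there are two routes:
\begin{itemize}
\item Build the NFA for $P(L)$ by turning unobservable transitions into $\varepsilon$-moves and eliminating them via $\varepsilon$-closure. Letters $\gamma$ with $\pi_2(\gamma)=\varepsilon$ then act as self-loops on every state, and letters with $\pi_2(\gamma)=c$ act as $c$.
\item More simply, run $G$ directly: a letter $(\tau,\varepsilon)$ moves along any unobservable transition, or stays put (to allow letters of $\Tuo$ that $P(L)$ does not constrain), and a letter $(\tau,c)$ moves along a $c$-transition possibly preceded by unobservable steps. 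This is realized with an $\varepsilon$-closure precomputed in polynomial time.
\end{itemize}
Either way we get $n$ states with all states final. The product of the two automata has $n^2$ states and recognizes $W$. All steps are polynomial in $n$, $|\Sigma|$ and $|\Gamma|\le|T|+|\So|$.

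Finally, inclusion of NFA languages is decidable in \PSPACE{}. One guesses a word letter by letter while tracking the subset of states of $\mathcal{A}_W$ reached and the subset of $\mathcal{A}_h$ reached, and accepts a counterexample when the first subset contains a final state and the second does not. This needs polynomial space and Savitch's theorem, so ROC verification is in \PSPACE{}.

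The main obstacle is getting the automaton for $\pi_2^{-1}(P(L))$ right. $P$ erases letters, so the $\varepsilon$-closure must be handled carefully, and letters with empty second component must be accepted in every state. The key check is that the resulting language equals $\pi_2^{-1}(P(L))$ exactly: for any $w$ with $\pi_2(w)\in P(L)$, the inserted letters $(\tau,\varepsilon)$ must be absorbed by self-loops rather than tied to actual unobservable moves of $G$.
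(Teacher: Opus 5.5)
Your proposal is correct and follows essentially the same route as the paper. You identify $W$ and $h(L)$ as the codes of $\Fm(L)$ and $\Jm(L)$ under the injective map $\pair$, so that Lemma~\ref{lem:pair} turns ROC into $W\subseteq h(L)$; you then build $\mathcal{A}_h$ by relabeling $G$ by $h$, build $\mathcal{A}_W$ as the product of the inverse images of the automata for $R(L)$ and $P(L)$, and invoke \PSPACE{} inclusion. Your explicit handling of the letters $(\tau,\varepsilon)$ as self-loops in the automaton for $\pi_2^{-1}(P(L))$ spells out what the paper leaves implicit in ``taking inverse morphic images adds no states.''
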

\begin{proof}
	By definition, $w\in W$ if and only if $\pi_1(w)\in R(L)$ and $\pi_2(w)\in P(L)$, so $\pair(W)=Z\cap(R(L)\times P(L))=\Fm(L)$. By Lemma~\ref{lem:pairs}, $\pair$ restricts to a bijection between $W$ and the compatible pairs.

	We show that $R$ is ROC if and only if $W\subseteq h(L)$. By~\eqref{eq:hproj}, $\pair(h(s))=(R(s),P(s))$ for every $s\in\Sigma^*$; as $\pair$ is injective, $w=h(s)$ if and only if $\pair(w)=(R(s),P(s))$. Assume ROC and let $w\in W$. Then $\pair(w)=(t',x)$ is a compatible pair, and Lemma~\ref{lem:pair} yields $s'\in L$ with $R(s')=t'$ and $P(s')=x$. Then $\pair(h(s'))=(t',x)=\pair(w)$, and hence $w=h(s')\in h(L)$. Conversely, assume $W\subseteq h(L)$ and let $(t',x)$ be compatible. Its preimage $w=\pair^{-1}(t',x)$ belongs to $W$, and hence $w=h(s')$ for some $s'\in L$, and~\eqref{eq:hproj} gives $R(s')=\pi_1(w)=t'$ and $P(s')=\pi_2(w)=x$; the pair is realized and Lemma~\ref{lem:pair} applies.

	Regular languages are closed under morphisms, inverse morphisms, and intersection, thus $W$ and $h(L)$ are regular. An NFA for $R(L)$ is obtained from $G$ by relabeling its transitions by $R$, and one for $P(L)$ by relabeling them by $P$ and removing the resulting $\varepsilon$-transitions; taking inverse morphic images adds no states, so $\mathcal{A}_W$ has $O(n^2)$ states after the product, and $\mathcal{A}_h$, obtained by relabeling the transitions of $G$ by $h$, has $n$ states. All states of both automata are final. Deciding inclusion of NFAs is in \PSPACE~\cite{MS72},~\cite[Ch.~10]{HMU06}.
\end{proof}

	We write $\mathcal{A}_W$ and $\mathcal{A}_h$ for these two automata throughout. Since $h$ is letter-to-letter, $\mathcal{A}_h$ is literally $G$ with relabeled transitions.

\subsection{Hardness}\label{ssec:hard}
	The reductions are from the universality problem for NFAs with all states final, which is \PSPACE-complete even over a binary alphabet~\cite{KRS09}: given an NFA $A$ over $\Delta$, \emph{universality} asks whether $L(A)=\Delta^*$. We use the following normal form.

\begin{lemma}\label{lem:pad}
	The universality problem for NFAs with all states final over $\Delta=\{a,b\}$ is \PSPACE-complete under the restriction that the input automaton $A$ has a unique initial state with no incoming transitions and satisfies $a^*\subseteq L(A)$.
\end{lemma}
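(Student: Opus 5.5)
Membership in \PSPACE{} is inherited from the unrestricted problem, so the plan is a polynomial-time reduction from universality of NFAs with all states final over $\{a,b\}$, which is \PSPACE-complete by~\cite{KRS09}, to the restricted version. Given such an NFA $A=(Q,\{a,b\},\delta,I)$, we build $A'$ with $L(A')$ universal if and only if $L(A)$ is universal, with a unique initial state without incoming transitions, and with $a^*\subseteq L(A')$.

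First normalize the initial state. Add a fresh state $q_0$ as the unique initial state, and for every $c\in\{a,b\}$ add the transitions $q_0\xrightarrow{c}q$ for all $q\in\delta(I,c)$. No transition enters $q_0$, and $L$ is unchanged, since $\varepsilon$ is accepted in both automata and nonempty strings follow the same runs after the first step. If $I=\emptyset$, then $L(A)=\emptyset$ is not universal, and we output a fixed non-universal normalized instance, for example one accepting exactly $a^*$.

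Next force $a^*$ into the language without affecting universality. Encode $w=c_1\cdots c_m$ over $\{a,b\}$ by the morphism $e(a)=ba$, $e(b)=bb$. Build $A'$ recognizing the prefix-closed language
\[
L' \;=\; a^*\,\cup\, a^*\,\overline{e(L(A))}\,.
\]
To do this, take a fresh initial state $p_0$ with an $a$-loop (then $p_0$ has an incoming transition, so we add a separate initial copy $p_0'$ that has the same outgoing transitions as $p_0$ and no incoming ones). From $p_0$, simulate $A$ on encoded letters, using one intermediate state per transition to split $ba$ and $bb$. All states are final, and the result is prefix-closed, so prefixes of codes are accepted. Clearly $a^*\subseteq L'$.

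The correctness check is the main obstacle. We must show that $L'=\{a,b\}^*$ if and only if $L(A)=\{a,b\}^*$. If $L(A)$ is universal, then $\overline{e(\{a,b\}^*)}=\{a,b\}^*$, because every string over $\{a,b\}$ is a prefix of an encoding: read it in blocks starting at $b$. More precisely, every string equals $a^k y$ with $y$ empty or starting with $b$, and $y$ is a prefix of some $e(w)$, since after each $b$ the next letter is arbitrary and then another $b$ begins a block. Hence $L'$ is universal. This is why the encoding starts each block with $b$ and the leading $a^*$ absorbs initial $a$'s. Conversely, if $w\notin L(A)$, then $e(w)\notin L'$: it starts with $b$ (or $w=\varepsilon$, which is impossible since $\varepsilon\in L(A)$), so its $a^*$-prefix is empty, and the decomposition of $e(w)$ into blocks is unique. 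A prefix-closure witness $e(w)v$ with $v$ nonempty would have $e(w)$ as a block-aligned prefix of $e(w')$ with $w'\in L(A)$, and prefix-closedness of $L(A)$ would then give $w\in L(A)$, a contradiction. The construction is linear in size, which completes the reduction.
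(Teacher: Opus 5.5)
Your reduction fails in the direction ``$L(A)$ universal $\Rightarrow$ $L'$ universal''. In fact $L'$ is never universal, so your map sends every instance, yes or no, to a no-instance.

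The false step is the claim that every string $a^k y$, with $y$ empty or starting with $b$, has $y$ a prefix of some $e(w)$. Every word of $e(\{a,b\}^*)=\{ba,bb\}^*$ carries $b$ at every odd position, and so does every prefix of such a word. Take $baa$:
\begin{itemize}
\item it is not in $a^*$;
\item it begins with $b$, so the only decomposition $a^k y$ has $k=0$;
\item its prefixes of length $3$ would have to be $bab$ or $bbb$, so it is not a prefix of any word in $\{ba,bb\}^*$.
\end{itemize}
Hence $baa\notin L'$ for every $A$. Your justification ``after each $b$ the next letter is arbitrary and then another $b$ begins a block'' is exactly where the argument breaks: nothing forces the letter after a completed block to be $b$. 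Your converse direction is fine, since $e$ is a uniform block code and $L(A)$ is prefix-closed.

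The missing idea is that strings falling outside the code must be accepted outright, so that they cannot spoil universality. You could repair your encoding by sending every $a$ read at a block boundary to a state looping under $a$ and $b$. But once you do that, the encoding is superfluous: it suffices to do it once, at the start, which is the paper's construction. A fresh initial state $q_0$ goes under $a$ to a state looping under $\{a,b\}$, and under $b$ to the initial states of $A_0$. This gives $L(A)=\{\varepsilon\}\cup a\Delta^*\cup b\cdot L(A_0)$. That language contains $a^*$, has an initial state without incoming transitions, and is universal if and only if $L(A_0)$ is.
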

\begin{proof}
	Given an NFA $A_0$ over $\Delta$ with all states final, let $A$ consist of a fresh initial state $q_0$, a state $u$ looping under $a$ and $b$, the transition $(q_0,a,u)$, a copy of $A_0$, and the transitions $(q_0,b,i)$ for every initial state $i$ of $A_0$. Then $A$ has all states final, $q_0$ has no incoming transitions, and $L(A)=\{\varepsilon\}\cup a\Delta^*\cup b\cdot L(A_0)$ is prefix-closed, contains $a^*$, and is universal if and only if $L(A_0)$ is.
\end{proof}

	The first hardness result concerns nondeterministic plants and uses a single unobservable event.

\begin{lemma}\label{lem:nfa-hard}
	ROC verification is \PSPACE-hard for NFAs, even if $|\So|=2$ and $|\Suo|=|\To|=|\Tuo|=1$, that is, if $R$ merges only one pair of observable events.
\end{lemma}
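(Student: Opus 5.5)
We reduce from the universality problem in the normal form of Lemma~\ref{lem:pad}. Let $A$ be an NFA over $\Delta=\{a,b\}$ with all states final and $a^*\subseteq L(A)$. Take $\So=\{a,b\}$, $\Suo=\{u\}$, $\To=\{\nu\}$, $\Tuo=\{\tau\}$, and define $R(a)=R(b)=\nu$ and $R(u)=\tau$. This $R$ is surjective, preserves observability status, and merges exactly one pair of observable events, as the statement requires.

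The plant is
\[
	L \;=\; L(A)\,\cup\, u\Delta^* .
\]
An NFA $G$ for $L$, with all states final, consists of a copy of $A$, a fresh state $p_0$ with a transition under $u$ to a fresh state $p_1$, and $a,b$-self-loops on $p_1$. The initial states of $G$ are those of $A$ together with $p_0$. Alternatively, one can use a single fresh initial state that duplicates the outgoing transitions of the initial state of $A$. The language $L$ is prefix-closed, and $G$ is built in polynomial time.

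We then compute the compatible pairs. Every string of $L$ either is $u$-free and lies in $L(A)$, or has the form $uy$ with $y\in\Delta^*$. Hence $R(L)\subseteq\nu^*\cup\tau\nu^*$. Since $a^*\subseteq L(A)$ and $u\Delta^*\subseteq L$, in fact $R(L)=\nu^*\cup\tau\nu^*$ and $P(L)=\Delta^*$. So the compatible pairs are exactly $(\nu^k,x)$ and $(\tau\nu^k,x)$ with $|x|=k$.

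Each pair $(\tau\nu^k,x)$ is realized by $ux\in L$. A pair $(\nu^k,x)$ can only be realized by a $u$-free string $s$ with $P(s)=x$, that is, by $s=x$. It is therefore realized if and only if $x\in L(A)$. By Lemma~\ref{lem:pair}, $R$ is ROC with respect to $L$ and $P$ if and only if $L(A)=\Delta^*$, which gives \PSPACE-hardness.

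The only delicate point is making every candidate $x\in\Delta^*$ appear in a compatible pair without making it trivially realized. This is handled in two ways. The $u$-branch supplies $P(L)=\Delta^*$ while only producing template strings beginning with $\tau$. The hypothesis $a^*\subseteq L(A)$ from Lemma~\ref{lem:pad} places every $\nu^k$ in $R(L)$. I expect this interplay to be the only step needing care; the rest is routine bookkeeping.
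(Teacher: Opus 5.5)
Your proof is correct and is essentially the paper's argument with the two branches swapped. The paper uses $L=\{a,b\}^*\cup u\cdot L(A)$: its $u$-free pairs are trivially realized, and the test happens on the pairs $(\hat u\,\hat a^{|w|},w)$, whose only possible witness is $uw$. You instead put $A$ on the $u$-free side and the universal language behind $u$. In both versions the forced uniqueness of the witness reduces ROC to universality of $A$, and the hypothesis $a^*\subseteq L(A)$ is what puts every relevant template string into $R(L)$.
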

\begin{proof}
	Let $A$ over $\{a,b\}$ be as in Lemma~\ref{lem:pad}, with initial state $q_0$. Define $\So=\{a,b\}$ and $\Suo=\{u\}$, and the relabeling $R$ onto $\To=\{\hat a\}$ and $\Tuo=\{\hat u\}$ by $R(a)=R(b)=\hat a$ and $R(u)=\hat u$. The prefix-closed language
	\(
	  	L=\{a,b\}^*\cup u\cdot L(A)
	\)
	is recognized by an NFA with all states final and two states more than $A$: a fresh initial state $p_0$, a state $r$ with self-loops under $a$ and $b$, and the transitions $(p_0,a,r)$, $(p_0,b,r)$, and $(p_0,u,q_0)$.

	Since $a^*\subseteq L(A)$, we have $R(L)=\hat a^*\cup\hat u\,\hat a^*$ and $P(L)=\{a,b\}^*\cup L(A)=\{a,b\}^*$. For $t'\in R(L)$ and $x\in P(L)$, both $\PT(t')$ and $\Ro(x)$ are powers of $\hat a$, and compatibility of $(t',x)$ says that the number of $\hat a$'s in $t'$ equals $|x|$. The compatible pairs are therefore $(\hat a^{|w|},w)$ and $(\hat u\,\hat a^{|w|},w)$ for $w\in\{a,b\}^*$.

	A witness for $(\hat a^{|w|},w)$ contains no unobservable event and has observation $w$; hence it is $w$ itself, and $w\in\{a,b\}^*\subseteq L$, so the pair is realized. A witness for $(\hat u\,\hat a^{|w|},w)$ is of the form $uv$ with $v$ observable and $P(uv)=v=w$; thus it is $uw$, and $uw\in L$ if and only if $w\in L(A)$. Hence $R$ is ROC with respect to $L$ and $P$ if and only if $L(A)=\{a,b\}^*$, which is \PSPACE-hard to decide by Lemma~\ref{lem:pad}.
\end{proof}

	To show hardness for DFAs, we need more unobservable events, since the situation of Lemma~\ref{lem:nfa-hard} is tractable for DFAs by Corollary~\ref{cor:single-uo} below.

\begin{lemma}\label{lem:dfa-hard}
	ROC verification is \PSPACE-hard for DFAs, even if $|\So|=|\Suo|=2$ and $|\To|=|\Tuo|=1$, that is, if $R$ merges one pair of observable and one pair of unobservable events.
\end{lemma}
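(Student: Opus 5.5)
The plan is to reduce from universality for NFAs with all states final over $\Delta=\{a,b\}$, in the normal form of Lemma~\ref{lem:pad}. The idea is the same as in Lemma~\ref{lem:nfa-hard}, but the nondeterminism of $A$ is moved into unobservable events so that the plant becomes a DFA. Let $\So=\{a,b\}$ and $\Suo=\{u_1,u_2\}$, and let $R(a)=R(b)=\hat a$ and $R(u_1)=R(u_2)=\hat u$. Since $R$ erases the distinction between $u_1$ and $u_2$, and $P$ erases them altogether, a nondeterministic choice of $A$ encoded by a word over $\{u_1,u_2\}$ is invisible both at the template level and in the observation.

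\emph{Step 1: encode the choices by fixed-length codes.} Let $A=(Q,\Delta,\delta,\{q_0\})$ and choose $m$ with $2^m>|Q|$. For each state $q$ and each letter $\sigma$, assign distinct codes $c\in\{u_1,u_2\}^m$ to the successors in $\delta(q,\sigma)$, and reserve one further code $c_\bot$. The DFA fragment $D_A$ has a copy of each state $q$ and a code tree of depth $m$ hanging off $q$. Reading a successor code $c$ and then the letter $\sigma$ leads to the encoded successor of $q$. Reading $c_\bot$ leads to a dead end with no further transitions. With all states final this is deterministic, since different successors receive different codes. Every observable step of $D_A$ is therefore preceded by exactly $m$ unobservable events, and from every state every prefix of length at most $m$ of $c_\bot$ is available.

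\emph{Step 2: assemble the plant.} Take a fresh initial state $p_0$. The transition $(p_0,u_1,q_0)$ enters $D_A$, so the template strings from this branch have the shape $\hat u(\hat u^m\hat a)^*\hat u^{\le m}$. A free branch starting with an observable event realizes every observation and gives $P(L)=\{a,b\}^*$. It also places in $R(L)$ only strings beginning with $\hat a$, that is, strings of the form $(\hat a\hat u^m)^*$ with its prefixes; this branch is a deterministic cycle that reads any of $a,b$ and then any code in $\{u_1,u_2\}^m$. The first template letter ($\hat u$ versus $\hat a$) therefore tells the two branches apart. Because $a^*\subseteq L(A)$, the $D_A$ branch contributes all template strings $\hat u(\hat u^m\hat a)^k\hat u^j$ with $k\ge 0$ and $0\le j\le m$. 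This uses $a$-runs of arbitrary length together with the dead-end code to pad at the end.

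\emph{Step 3: check that ROC holds exactly when $A$ is universal.} I will use Lemma~\ref{lem:pair} and list the compatible pairs $(t',x)$. If $t'$ begins with $\hat a$, or $t'=\varepsilon$, then any interleaving of $x$ with codes in the free branch is a witness, because the free branch allows all codes. If $t'=\hat u(\hat u^m\hat a)^k\hat u^j$, compatibility forces $|x|=k$. Any witness must begin with $u_1$ and run inside $D_A$, so it traces a run of $A$ on $x$; such a witness exists if and only if $x\in L(A)$. For the trailing $\hat u^j$, the witness uses a prefix of $c_\bot$, which exists at every state. Hence ROC holds if and only if $L(A)=\{a,b\}^*$. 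The plant has $O(m|Q|^2)$ states and is built in polynomial time, so hardness follows from Lemma~\ref{lem:pad}.

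The main obstacle is Step 3. The concern is that a compatible pair might fail to be realized, or be realized, for reasons unrelated to membership in $L(A)$. Prefix-closure means that $R(L)$ contains template strings ending partway through a code, and the witness must still exist from whatever state the run of $x$ has reached, even a state without outgoing transitions. This is why I reserve the dead-end code $c_\bot$ at every state. I also have to make sure that the free branch cannot serve as a witness for template strings from the $D_A$ branch; the distinct first template letter is what guarantees this.
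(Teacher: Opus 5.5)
Your proposal is correct and follows essentially the paper's reduction. Both reduce from universality in the normal form of Lemma~\ref{lem:pad}, encode the nondeterministic choices of $A$ by fixed-length blocks over the two merged unobservable events in per-state code trees, use a separate branch to get $P(L)=\{a,b\}^*$, and rely on $a^*\subseteq L(A)$ to put all needed template strings into $R(L)$. The differences are cosmetic: the paper codes global transition indices in complete trees, so trailing partial blocks are always readable, and separates the branches by a purely observable loop branch with image $\hat a^*$; you use per-$(q,\sigma)$ successor codes with a reserved dead-end code $c_\bot$ for the padding, and separate the branches by the first template letter.
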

\begin{proof}
	Let $A$ over $\{a,b\}$ be as in Lemma~\ref{lem:pad}, with initial state $q_0$ and state set $Q$; enumerate its transitions as $t_i=(p_i,c_i,q_i)$ for $1\le i\le k$, and let $\ell=\lceil\log_2 k\rceil$, where $\ell\ge1$ since $k\ge3$ by the construction of Lemma~\ref{lem:pad}. Define $\So=\{a,b\}$ and $\Suo=\{u_0,u_1\}$, and let $R(a)=R(b)=\hat a$ and $R(u_0)=R(u_1)=\hat u$. The construction determinizes the choice among the transitions of $A$: before each observable step, a block of $\ell$ unobservable events selects the transition to apply, and the two selectors are indistinguishable on the template because $R$ merges them.

	The DFA $G$, with all states final and initial state $p_0$, consists of two parts. In the \emph{loop branch}, the letters $a,b$ lead from $p_0$ to a state looping under $\{a,b\}$; this part generates $\{a,b\}^*$. In the \emph{run branch}, there is, for every $q\in Q$, a complete binary tree with nodes $(q,\beta)$ for $\beta\in\{0,1\}^{\le\ell}$ and the transitions $\bigl((q,\beta),u_c,(q,\beta c)\bigr)$; identifying the leaves with the indices $1,\dots,2^{\ell}$, the leaf with index $i$ carries the single transition under $c_i$ to $(q_i,\varepsilon)$ if $i\le k$ and $p_i=q$. Finally, add the transitions $\bigl(p_0,u_c,(q_0,c)\bigr)$ for $c\in\{0,1\}$. The automaton is deterministic of size $O(k\cdot|Q|)$; the run branch is reentered only at the tree roots, which carry only unobservable transitions, so the two parts do not mix. Let $L=L(G)$.

	A run-branch string alternates blocks of $\ell$ unobservable letters with single observable letters, possibly ending inside a block; conversely, every such pattern is attained, since $a^*\subseteq L(A)$ and the trees are complete. Hence $P(L)=\{a,b\}^*$ and
	\(
		R(L)=\hat a^*\cup\{(\hat u^{\ell}\hat a)^n\,\hat u^{\,j} : n\ge0,\ 0\le j\le\ell\}.
	\)
	As in Lemma~\ref{lem:nfa-hard}, a pair $(t',x)$ is compatible if and only if the number of $\hat a$'s in $t'$ equals $|x|$. A pair $(\hat a^{|w|},w)$ is realized by the loop-branch string $w$ itself. The remaining pairs are $\bigl((\hat u^{\ell}\hat a)^n\,\hat u^{\,j},w\bigr)$ with $w=w_1\cdots w_n$ and $n+j\ge1$; the template and the observation force a witness to be of the form $\varphi_1w_1\cdots\varphi_nw_n\,\psi$ with $\varphi_m\in\{u_0,u_1\}^{\ell}$ and $\psi\in\{u_0,u_1\}^{j}$, which contains an unobservable letter and hence lies in the run branch. There, each block $\varphi_m$ descends the tree of the current state, and $w_m$ is enabled if $\varphi_m$ selects a transition of $A$ under $w_m$ from that state; the trailing $\psi$ is always readable. A witness thus exists if and only if $A$ has a run on $w$, that is, if and only if $w\in L(A)$. Hence $R$ is ROC with respect to $L$ and $P$ if and only if $L(A)=\{a,b\}^*$.
\end{proof}

\begin{theorem}\label{thm:complete}
	ROC verification is \PSPACE-complete for both NFAs and DFAs.
\end{theorem}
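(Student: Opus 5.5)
The theorem follows by combining the upper bound with the two hardness lemmas; no new construction is needed.

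For membership, I would invoke Theorem~\ref{thm:upper}. It places ROC verification in \PSPACE{} for arbitrary NFAs $G$. Every DFA is in particular an NFA, so the same bound covers deterministic plants. Concretely, the algorithm builds $\mathcal{A}_W$ with $O(n^2)$ states and $\mathcal{A}_h$ with $n$ states in polynomial time. It then decides $L(\mathcal{A}_W)\subseteq L(\mathcal{A}_h)$ in polynomial space, for instance by nondeterministically guessing a word of $W\setminus h(L)$ letter by letter. During the guess it maintains a state of $\mathcal{A}_W$ and the reachable subset of $\mathcal{A}_h$. The bound then follows from Savitch's theorem.

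For hardness, I would cite Lemma~\ref{lem:nfa-hard} for NFAs and Lemma~\ref{lem:dfa-hard} for DFAs. Both are polynomial-time many-one reductions from universality of NFAs with all states final over a binary alphabet, in the normal form of Lemma~\ref{lem:pad}. That problem is \PSPACE-complete. The one thing to check is that the reductions are polynomial. In Lemma~\ref{lem:nfa-hard}, the NFA has only two more states than $A$. In Lemma~\ref{lem:dfa-hard}, the DFA has size $O(k\cdot|Q|)$, because each tree has $2^{\ell+1}-1\le 4k$ nodes when $\ell=\lceil\log_2 k\rceil$. Hence the trees are polynomial despite the exponent $\ell$.

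Since DFA-hardness already implies NFA-hardness, Lemma~\ref{lem:nfa-hard} is needed only for the stronger alphabet restriction. Combining the upper and lower bounds gives \PSPACE-completeness in both cases. The only point needing care, the polynomial size of the binary selector trees, was settled above, so I expect no real obstacle.
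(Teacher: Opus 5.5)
Your proposal is correct and matches the paper's proof, which likewise combines Theorem~\ref{thm:upper} for membership (covering DFAs as special NFAs) with Lemmas~\ref{lem:nfa-hard} and~\ref{lem:dfa-hard} for hardness. Your additional details are sound elaborations of points the paper leaves implicit in the cited results: the on-the-fly subset construction with Savitch's theorem, and the bound $2^{\ell+1}-1<4k$ showing the selector trees are polynomial.
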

\begin{proof}
	Membership is Theorem~\ref{thm:upper}, hardness for NFAs is Lemma~\ref{lem:nfa-hard}, and hardness for DFAs is Lemma~\ref{lem:dfa-hard}.
\end{proof}

	Theorem~\ref{thm:complete} rules out polynomial-time verification unless $\PTIME=\PSPACE$, and randomized polynomial-time verification unless $\PSPACE\subseteq\BPP$~\cite{AB09}.

\subsection{A Polynomial-Time Fragment}\label{ssec:ptime}

\begin{proposition}\label{prop:ptime}
	ROC verification is decidable in polynomial time for DFAs with a relabeling that is injective on unobservable events.
\end{proposition}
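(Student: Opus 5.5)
The plan is to reuse the reduction of Theorem~\ref{thm:upper}: $R$ is ROC with respect to $L=L(G)$ and $P$ if and only if $W\subseteq h(L)$, where $W$ is recognized by the NFA $\mathcal{A}_W$ with $O(n^2)$ states and $h(L)$ by $\mathcal{A}_h$. Inclusion of NFAs is \PSPACE-hard in general. Inclusion of an NFA in a DFA, however, is decidable in polynomial time. So the whole argument reduces to showing that, under the hypothesis, $\mathcal{A}_h$ is deterministic.

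\textbf{Step 1: $h$ is injective on letters.} I would check that $h$ restricted to $\Sigma$ is injective.
\begin{itemize}
\item Two observable events $c\neq d$ give $h(c)=(R(c),c)\neq(R(d),d)=h(d)$, because the second components differ. This holds whether or not $R$ merges them.
\item An observable and an unobservable event cannot collide: their second components are a letter and $\varepsilon$, respectively.
\item Two unobservable events $c\neq d$ give $(R(c),\varepsilon)$ and $(R(d),\varepsilon)$. These differ exactly because $R$ is injective on $\Suo$.
\end{itemize}

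\textbf{Step 2: $\mathcal{A}_h$ is a DFA.} By the remark after Theorem~\ref{thm:upper}, $\mathcal{A}_h$ is $G$ with each transition $(p,c,q)$ relabeled to $(p,h(c),q)$. Since $G$ has a unique initial state and $h$ is injective on letters, no state acquires two outgoing transitions with the same label. Hence $\mathcal{A}_h$ is a DFA with $n$ states over $\Gamma$. This is the one place where the hypothesis is used; the construction of Lemma~\ref{lem:dfa-hard} shows that it cannot be dropped.

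\textbf{Step 3: decide the inclusion.}
\begin{itemize}
\item Complete $\mathcal{A}_h$ by adding a non-final sink state. This yields a complete DFA with $n+1$ states.
\item Complement it by swapping final and non-final states. Because all original states are final, the only final state of the complement is the sink.
\item Form the product of $\mathcal{A}_W$ with this complement. It has $O(n^3)$ states over the alphabet $\Gamma$, whose size is at most $|\Tuo|+|\So|$.
\item Then $W\subseteq h(L)$ holds if and only if no product state whose $\mathcal{A}_h$-component is the sink is reachable from an initial state. This is graph reachability, which takes polynomial time.
\end{itemize}

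The only technical point in constructing $\mathcal{A}_W$ is that the NFA for $P(L)$ requires $\varepsilon$-removal. Theorem~\ref{thm:upper} already asserts that this is polynomial, so I expect no real obstacle beyond the injectivity check in Step 1.
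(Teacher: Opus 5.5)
Your proposal is correct and follows the paper's proof essentially step for step. Injectivity of $h$ on letters makes $\mathcal{A}_h$ deterministic, the sink-completed complement is a DFA for the complement of $h(L)$, and a product with $\mathcal{A}_W$ plus reachability decides $W\subseteq h(L)$. The only cosmetic difference is that you separate observable from unobservable images by their second components, whereas the paper uses $\To\cap\Tuo=\emptyset$; both arguments work.
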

\begin{proof}
	We use the notation of Theorem~\ref{thm:upper}. The letters $h(c)$, for $c\in\Sigma$, are pairwise distinct: for $c\in\So$ they differ in the second component, for $c\in\Suo$ they differ in the first by the injectivity of $R$ on $\Suo$, and the two kinds differ as $\To\cap\Tuo=\emptyset$. Hence $h$ is injective, and $\mathcal{A}_h$, obtained by relabeling the transitions of the DFA $G$ by $h$, is deterministic; completing it with a sink state and making the sink the only final state gives a DFA for the complement of $h(L)$ with $|Q|+1$ states. Since $\mathcal{A}_W$ is an NFA with $O(|Q|^2)$ states and $R$ fails ROC if and only if $W$ intersects the complement of $h(L)$, a product construction and reachability decide ROC in polynomial time.
\end{proof}

	Since every relabeling on an unobservable alphabet with at most one event is injective, we obtain the following.

\begin{corollary}\label{cor:single-uo}
	For DFAs with $|\Suo|\le1$, ROC verification is decidable in polynomial time. \QEDbox
\end{corollary}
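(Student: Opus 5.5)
The plan is to reduce Corollary~\ref{cor:single-uo} directly to Proposition~\ref{prop:ptime}, so the only thing to verify is its hypothesis: that $R$ is injective on $\Suo$ whenever $|\Suo|\le1$.

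First, I split on the size of $\Suo$.
\begin{itemize}
\item If $\Suo=\emptyset$, the restriction of $R$ to $\Suo$ is the empty map, which is vacuously injective.
\item If $\Suo=\{u\}$, the restriction is a map defined on a single point, and such a map is injective.
\end{itemize}
Injectivity on $\Suo$ places no condition on how $R$ treats observable events, so no requirement on $\So$ is needed.

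With injectivity on $\Suo$ established, Proposition~\ref{prop:ptime} applies to any DFA $G$ and any such relabeling. It builds the deterministic automaton $\mathcal{A}_h$, complements it with a sink state, and intersects the result with the $O(|Q|^2)$-state NFA $\mathcal{A}_W$ from Theorem~\ref{thm:upper}. Deciding emptiness of that product by reachability takes polynomial time.

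In the case $\Suo=\emptyset$ one could alternatively invoke Proposition~\ref{prop:trivial-roc}, which shows that ROC always holds there, so the answer is trivially yes. This is not needed, since Proposition~\ref{prop:ptime} covers both cases uniformly.

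There is no real obstacle. The only point deserving care is that Proposition~\ref{prop:ptime} nowhere assumes $\Suo\neq\emptyset$. In its proof, with $\Suo=\emptyset$, the case distinction for distinctness of the letters $h(c)$ reduces to the observable case alone, and the argument goes through unchanged. The corollary therefore follows without further construction, and it is consistent with Lemma~\ref{lem:dfa-hard}, which requires two unobservable events for hardness.
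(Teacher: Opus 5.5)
Your proposal is correct and matches the paper's argument: a relabeling is injective on an unobservable alphabet with at most one event, so Proposition~\ref{prop:ptime} applies directly. Your extra check that the proof of Proposition~\ref{prop:ptime} still goes through when $\Suo=\emptyset$ is sound, though the paper does not spell it out.
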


	Proposition~\ref{prop:trivial-roc} and Corollary~\ref{cor:single-uo} show that the parameters of Lemmata~\ref{lem:nfa-hard} and~\ref{lem:dfa-hard} cannot be improved: without merged observable events every instance is positive, and for DFAs the merging of unobservable events is unavoidable unless $\PTIME=\PSPACE$. Table~\ref{tab:landscape} summarizes.

\begin{table}
  \caption{ROC verification and unobservable events.}
  \label{tab:landscape}
  \centering
  \begin{tabular}{lll}
    \toprule
    Model & $R$ injective on $\Suo$ & $R$ merges events of $\Suo$ \\
    \midrule
    DFA & \PTIME (Prop.~\ref{prop:ptime})
              & \PSPACE-complete (Lem.~\ref{lem:dfa-hard}) \\
    NFA & \PSPACE-complete (Lem.~\ref{lem:nfa-hard})
              & \PSPACE-complete (Lem.~\ref{lem:dfa-hard}) \\
    \bottomrule
  \end{tabular}
\end{table}

\section{Sufficient Conditions}\label{sec:suff}
	By Theorem~\ref{thm:complete}, ROC verification is intractable in the worst case; by Proposition~\ref{prop:trivial-roc}, a condition on $(\Sigma,R)$ alone implies ROC for every language only if it implies that $\Suo=\emptyset$ or that $R$ is injective on $\So$. This section collects conditions that take the plant into account.

\subsection{Saturation}\label{ssec:sat}
	The following condition formalizes the interchangeability of the instantiations of an observable template event: wherever one of them can occur in the plant, every other can.

\begin{definition}\label{def:sat}
	A language $L\subseteq\Sigma^*$ is \emph{$\Ro$-saturated} if for every string $uav\in L$ with $a\in\So$ and every $b\in\So$ with $R(b)=R(a)$, also $ubv\in L$.
\end{definition}

\begin{proposition}\label{prop:sat}
	If a prefix-closed language $L$ is $\Ro$-saturated, then $R$ is ROC with respect to $L$ and $P$.
\end{proposition}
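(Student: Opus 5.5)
We plan to verify Definition~\ref{def:roc} directly: given $s\in L$ and $t'\in R(L)$ with $\PT(R(s))=\PT(t')$, we will take an arbitrary $s''\in L$ with $R(s'')=t'$ and modify $s''$ into the required witness $s'$ by exchanging its observable events one at a time, using saturation.

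First, by~\eqref{eq:comm},
\[
\Ro(P(s''))=\PT(R(s''))=\PT(t')=\PT(R(s))=\Ro(P(s)).
\]
Write $P(s'')=a_1\cdots a_k$ and $P(s)=b_1\cdots b_k$. These have the same length because $\Ro$ is letter-to-letter, and they satisfy $R(a_r)=R(b_r)$ for every $r$. In $s''$, let $i_1<\dots<i_k$ be the positions of the observable letters, so that the letter at $i_r$ is $a_r$.

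Next, we replace the observable letters one by one. Put $s_0=s''$, and let $s_r$ be obtained from $s_{r-1}$ by replacing the letter $a_r$ at position $i_r$ with $b_r$. By induction on $r$, Definition~\ref{def:sat} applies to the factorization $s_{r-1}=u\,a_r\,v$ with $a_r\in\So$ and $R(b_r)=R(a_r)$, and it gives $s_r=u\,b_r\,v\in L$. Each exchange preserves the $R$-image, since $R(b_r)=R(a_r)$. Unobservable letters are never touched. Hence $s':=s_k\in L$ satisfies $R(s')=R(s'')=t'$ and $P(s')=b_1\cdots b_k=P(s)$, as required. Note that prefix-closedness is not needed here; saturation alone suffices.

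I do not expect any step to be a real obstacle. The only point needing care is the bookkeeping that $P(s'')$ and $P(s)$ have equal length and are matched letter by letter under $R$. This follows because $\Ro$ is a letter-to-letter morphism, so equality of $\Ro$-images forces equality of lengths and letterwise equality of $R$-values. Alternatively, one could phrase the argument via Lemma~\ref{lem:pair}: take a compatible pair $(t',x)$ with $x=P(s)$, and transform a preimage of $t'$ in the same way.
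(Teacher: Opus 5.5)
Your proof is correct and follows the paper's argument: take a preimage of $t'$, use \eqref{eq:comm} and the letter-to-letter property to align its observable letters with those of the target observation, and exchange them one at a time via saturation. The only difference is presentational: the paper argues through compatible pairs and Lemma~\ref{lem:pair}, whereas you verify Definition~\ref{def:roc} directly, which also makes it visible that prefix-closedness is not used.
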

\begin{proof}
	Let $(t',x)$ be a compatible pair. Since $t'\in R(L)$, there is $s_0\in L$ with $R(s_0)=t'$. By~\eqref{eq:comm}, $\Ro(P(s_0))=\PT(t')=\Ro(x)$; since $\Ro$ is letter-to-letter, $P(s_0)$ and $x$ have the same length and, positionwise, the $j$-th letter of $x$ has the same $R$-image as the $j$-th letter of $P(s_0)$. Replace, in $s_0$, the observable letters by the corresponding letters of $x$ one at a time; each intermediate string lies in $L$ by saturation, and the final string $s'$ satisfies $P(s')=x$ and, since each replacement is by an event with the same $R$-image, $R(s')=R(s_0)=t'$. Hence the pair is realized, and Lemma~\ref{lem:pair} applies.
\end{proof}

	Proposition~\ref{prop:sat} subsumes the injectivity case of Proposition~\ref{prop:trivial-roc}: if $R$ is injective on $\So$, then no two observable events share an $R$-image, and hence every language is $\Ro$-saturated.

	Saturation of $L(G)$ is a language property, decidable by the inclusion $L(G^{\mathrm{sat}})\subseteq L(G)$, where $G^{\mathrm{sat}}$ arises from $G$ by adding, for every transition $(q,a,q')$ with $a\in\So$ and every $b$ with $R(b)=R(a)$, the transition $(q,b,q')$. Indeed, if the inclusion holds and $uav\in L(G)$ with $R(b)=R(a)$, then the run of $uav$ traverses some $(q,a,q')$ and the added transition $(q,b,q')$ gives $ubv\in L(G^{\mathrm{sat}})\subseteq L(G)$; conversely, if $L(G)$ is saturated, then a word of $L(G^{\mathrm{sat}})$ is turned into a word of $L(G)$ by replacing its added transitions by the original ones one at a time and applying saturation in the reverse direction at each step. The inclusion is polynomial-time decidable if $G$ is a DFA, and decidable in polynomial space in general.

	Saturation makes no disjointness assumption and hence applies to agents that share events, but it is only sufficient: for two agents with behaviors $\overline{a_1c_1}$ and $\overline{a_2c_2}$, where $R(a_1)=R(a_2)$ is observable and $R(c_1)=R(c_2)$ is unobservable, the shuffle contains $a_1a_2$ but not $a_1a_1$, and hence is not saturated, yet it is ROC, as one checks on the finitely many compatible pairs.

\subsection{Automaton-Level Tests}\label{ssec:aut}
	A \emph{simulation} of an NFA $\mathcal{A}_1$ by an NFA $\mathcal{A}_2$ is a relation $S$ between their state sets such that whenever $(p,q)\in S$ and $(p,c,p')$ is a transition of $\mathcal{A}_1$, there is a transition $(q,c,q')$ of $\mathcal{A}_2$ with $(p',q')\in S$, and such that $q$ is final in $\mathcal{A}_2$ whenever $(p,q)\in S$ and $p$ is final in $\mathcal{A}_1$; the second requirement is vacuous when all states of $\mathcal{A}_2$ are final. The simulation \emph{covers the initial states} if every initial state of $\mathcal{A}_1$ is $S$-related to an initial state of $\mathcal{A}_2$, in which case $L(\mathcal{A}_1)\subseteq L(\mathcal{A}_2)$. A \emph{bisimulation} on an NFA is a symmetric simulation of the NFA by itself, and two states are \emph{bisimilar} if some bisimulation relates them. The largest simulation and the largest bisimulation are computable in polynomial time~\cite{HHK95,PT87}.

\begin{lemma}\label{lem:sat-aut}
	Let $G$ be an NFA with all states final. Then (i) implies (ii), and (ii) implies that $L(G)$ is $\Ro$-saturated:
\begin{enumerate}
  \item[(i)] \emph{(state saturation)} for every transition $(q,a,q')$ of $G$ with $a\in\So$ and every $b\in\So$ with $R(b)=R(a)$, also $(q,b,q')$ is a transition of $G$;
  \item[(ii)] \emph{(saturation up to bisimulation)} for every transition $(q,a,q')$ of $G$ with $a\in\So$ and every $b\in\So$ with $R(b)=R(a)$, there is a transition $(q,b,q'')$ of $G$ with $q''$ bisimilar to $q'$.
\end{enumerate}
\end{lemma}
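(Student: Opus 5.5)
The implication (i)$\Rightarrow$(ii) is immediate. If $(q,a,q')$ is a transition with $a\in\So$ and $R(b)=R(a)$, then state saturation provides $(q,b,q')$ itself. We take $q''=q'$, which is bisimilar to itself because the identity relation on states is a bisimulation: it is symmetric and trivially a simulation of $G$ by itself.

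For (ii)$\Rightarrow$ saturation, the plan is to track languages from states. For a state $p$, let $L_p=\{w : \delta(p,w)\neq\emptyset\}$ be the language accepted from $p$; all states are final. The first step is the standard fact that bisimilar states accept the same language. Let $S$ be a bisimulation with $(p,q)\in S$ and let $w\in L_p$. Then $w$ labels a run $p=p_0\to p_1\to\cdots\to p_{|w|}$. By the simulation property we build, letter by letter, a run $q=q_0\to q_1\to\cdots$ under $w$ with $(p_j,q_j)\in S$, so $w\in L_q$. Symmetry of $S$ gives the reverse inclusion, hence $L_p=L_q$.

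Now take $uav\in L(G)$ with $a\in\So$ and $b\in\So$ with $R(b)=R(a)$. The run of $uav$ from some initial state passes through a state $q$ after reading $u$. It then takes a transition $(q,a,q')$, and $v\in L_{q'}$ because every prefix of an accepted run ends in a final state. Condition (ii) gives a transition $(q,b,q'')$ with $q''$ bisimilar to $q'$, so $v\in L_{q''}$ by the first step. Concatenating the run on $u$, the transition $(q,b,q'')$, and a run of $v$ from $q''$ yields $ubv\in L(G)$, as all states are final.

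The only step needing care is the transfer of the language through bisimilarity. It requires the simulation to be applied step by step along the whole run of $v$, and it needs symmetry only for the equality $L_p=L_q$. The inclusion $L_{q'}\subseteq L_{q''}$ alone would already suffice, so we need that $q''$ simulates $q'$, and bisimilarity provides this. Everything else is routine.
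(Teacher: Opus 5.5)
Your proof is correct and takes essentially the same route as the paper: the identity bisimulation gives (i)$\Rightarrow$(ii), and for (ii) you follow the run of $uav$, replace the $a$-transition by the $b$-transition that (ii) provides, and use that bisimilar states of an all-states-final NFA enable the same strings. The only difference is that you prove this language-equivalence fact, which the paper simply asserts, and you observe that it would suffice for $q''$ to simulate $q'$.
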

\begin{proof}
	(i) implies (ii) with the identity bisimulation. For (ii), let $uav\in L(G)$ along a path reaching $q$ after $u$, taking $(q,a,q')$, and reading $v$ from $q'$. By (ii), there is $(q,b,q'')$ with $q''$ bisimilar to $q'$; bisimilar states of an all-states-final NFA enable the same strings, and hence $v$ is readable from $q''$ and $ubv\in L(G)$.
\end{proof}

	Condition (i) is verifiable in time $O(|\delta|\cdot|\So|)$ by scanning the transitions of $G$, and condition (ii) in polynomial time by additionally computing the largest bisimulation.

	Theorem~\ref{thm:upper} characterizes ROC as an inclusion of two automata, and replacing the inclusion by the existence of a simulation gives a further polynomial-time test.

\begin{proposition}\label{prop:sim}
	If some simulation of $\mathcal{A}_W$ by $\mathcal{A}_h$ covers the initial states, then $R$ is ROC with respect to $L$ and $P$. The hypothesis is verifiable in time polynomial in the size of $G$, and it is not necessary unless $\PTIME=\PSPACE$.
\end{proposition}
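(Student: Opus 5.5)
The proof splits into the three claims of the statement, which I take in order: soundness, polynomial-time verifiability, and non-necessity.

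\emph{Soundness.} Suppose $S$ is a simulation of $\mathcal{A}_W$ by $\mathcal{A}_h$ that covers the initial states. I first show that such a simulation implies $L(\mathcal{A}_W)\subseteq L(\mathcal{A}_h)$ by the standard induction. For an accepting run of $\mathcal{A}_W$ on $w$, start from a related initial state of $\mathcal{A}_h$ and match each transition of the run through $S$. This builds a run of $\mathcal{A}_h$ on $w$ that stays inside $S$. All states of $\mathcal{A}_h$ are final, so this run is accepting; even if not, the finality clause in the definition of simulation would guarantee it. Hence $W\subseteq h(L)$, and Theorem~\ref{thm:upper} yields that $R$ is ROC.

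\emph{Complexity.} By Theorem~\ref{thm:upper}, $\mathcal{A}_W$ has $O(n^2)$ states and $\mathcal{A}_h$ has $n$ states, and both are computable in polynomial time from $G$; their transition sets are polynomial in $|G|$ and $|\Gamma|\le|T|+|\So|$. I then compute the largest simulation of $\mathcal{A}_W$ by $\mathcal{A}_h$ by the polynomial-time fixpoint refinement of~\cite{HHK95}, applied to the disjoint union of the two automata. Because simulations are closed under union, some simulation covering the initial states exists if and only if the largest one covers them. That final check is a polynomial scan over the initial states of $\mathcal{A}_W$, which gives the polynomial bound.

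\emph{Non-necessity.} Suppose the condition were also necessary. Then it would be equivalent to ROC, and ROC would be decidable in polynomial time by the previous paragraph. Theorem~\ref{thm:complete} would then give $\PTIME=\PSPACE$. So unless $\PTIME=\PSPACE$, there are ROC instances for which no covering simulation exists. The only delicate point is making sure this argument is stated conditionally, as the proposition does, rather than exhibiting a concrete counterexample. The main obstacle I expect is ensuring that the largest-simulation computation is genuinely polynomial here. In particular, $\mathcal{A}_W$ has $\varepsilon$-free transitions, because the NFA for $P(L)$ was $\varepsilon$-eliminated and $\Gamma$ contains no empty letter. This lets the one-step matching in the definition of simulation apply letter by letter.
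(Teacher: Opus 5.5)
Your proposal is correct and follows the paper's proof essentially step for step: a covering simulation yields $L(\mathcal{A}_W)\subseteq L(\mathcal{A}_h)$ and hence ROC by Theorem~\ref{thm:upper}; the largest simulation is computed in polynomial time via~\cite{HHK95} and the covering is checked directly; and if the condition were necessary, ROC verification would be in \PTIME, contradicting Theorem~\ref{thm:complete}. The details you add, such as closure of simulations under union and $\varepsilon$-freeness of $\mathcal{A}_W$, are accurate points that the paper leaves implicit.
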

\begin{proof}
	A simulation covering the initial states implies $L(\mathcal{A}_W)\subseteq L(\mathcal{A}_h)$, that is, $W\subseteq h(L)$, which is ROC by Theorem~\ref{thm:upper}. The largest simulation of one NFA by another is computable in polynomial time in their sizes~\cite{HHK95}, and the covering of the initial states is then checked directly; both automata are polynomial in $|G|$ by Theorem~\ref{thm:upper}. Were the hypothesis also necessary, ROC verification would be in \PTIME, contradicting Theorem~\ref{thm:complete}.
\end{proof}

	Saturation and the simulation test are independent. The simulation test does not imply saturation: for $L=\overline{\{b\}}$ with $\So=\{a,b\}$, $R(a)=R(b)=\hat a$, and any nonempty $\Suo$, the only compatible pairs are $(\varepsilon,\varepsilon)$ and $(\hat a,b)$, so $W=h(L)$; on the two-state DFA for $L$, the reachable states of $\mathcal{A}_W$ are the pairs $(q,q)$, relating $(q,q)$ to $q$ is a simulation covering the initial states, and $L$ is not $\Ro$-saturated because $a\notin L$. Nor does saturation imply the simulation test, which, unlike saturation, depends on the automaton presenting the language. Let $G$ have the states $0$, $1$, $2$, the initial state $0$, and the transitions $(0,a,1)$, $(0,a,2)$, $(0,b,1)$, $(0,b,2)$, $(1,b,0)$, $(1,u,0)$, and $(2,a,0)$, where $\So=\{a,b\}$, $\Suo=\{u\}$, $R(a)=R(b)=\hat a$, and $R(u)=\hat u$. Then $L(G)$ consists of the strings whose odd positions carry observable events; membership depends only on the positions of $u$, so $L(G)$ is $\Ro$-saturated, and $W\subseteq h(L(G))$ holds by Proposition~\ref{prop:sat} and Theorem~\ref{thm:upper}. Yet no simulation of $\mathcal{A}_W$ by $\mathcal{A}_h$ covers the initial states: the initial state $(0,0)$ of $\mathcal{A}_W$ would have to be related to the initial state $0$ of $\mathcal{A}_h$, and its $(\hat a,a)$-successor $(2,1)$ to $1$ or to $2$; but $(2,1)$ enables both $(\hat a,b)$, its second component reading $b$ directly, and $(\hat a,a)$, its second component reading $a$ through the $\varepsilon$-transition left by erasing $u$, whereas $1$ does not enable $(\hat a,a)$ and $2$ does not enable $(\hat a,b)$. The two tests compare different objects: saturation compares the events identified by $R$, the simulation two derived automata whose branching reflects that of $G$.

\section{Compositionality}\label{sec:comp}
	Relabelings $R_i\colon\Sigma_i\to T_i$, for $i=1,\dots,n$, \emph{agree on shared events} if $R_i|_{\Sigma_i\cap\Sigma_j}=R_j|_{\Sigma_i\cap\Sigma_j}$ for all $i\ne j$, in which case the union $R=\bigcup_i R_i$ is a well-defined map on $\bigcup_i\Sigma_i$. They have \emph{pairwise disjoint templates} if they agree on shared events and, in addition, identify no events \emph{across} the components, that is, if for all $i\ne j$,
	\[
		R_i(c_i)=R_j(c_j) \text{ with } c_i\in\Sigma_i \text{ and } c_j\in\Sigma_j \implies c_i=c_j \,,
	\]
	in which case $c_i=c_j\in\Sigma_i\cap\Sigma_j$. Two events of the \emph{same} component may still share a template image, which is what a group of isomorphic agents requires. For pairwise disjoint alphabets, the agreement is trivial and the second condition reduces, by surjectivity of $R_i$, to $T_i\cap T_j=\emptyset$.

	Disjoint templates localize the template image of an event: if $c\in\Sigma_j$ satisfies $R(c)\in T_i$ for some $i\ne j$, then, since $R_i$ is surjective onto $T_i$, there is $c_i\in\Sigma_i$ with $R(c_i)=R(c)$, and hence $c=c_i\in\Sigma_i\cap\Sigma_j$; together with the trivial converse, for every $c$ and every $i$,
	\begin{equation}\label{eq:type}
		R(c)\in T_i \iff c\in\Sigma_i \,.
	\end{equation}
	Condition~\eqref{eq:type} is the requirement that $R$ ``preserves the local status of events'' imposed in~\cite{LKL22}. Note, however, that disjointness of templates is \emph{strictly stronger}, since~\eqref{eq:type} alone does not forbid two distinct events of different components from having the same image, and that the difference is not cosmetic: Proposition~\ref{prop:comp} fails under~\eqref{eq:type} alone, as the counterexample following its proof shows.

	Throughout, $\Sigma_i=\Sigma_{i,o}\mathbin{\dot\cup}\Sigma_{i,uo}$ and $T_i=T_{i,o}\mathbin{\dot\cup}T_{i,uo}$, and shared events have the same observability status in all components containing them, so that $\So=\bigcup_i\Sigma_{i,o}$ and $\Suo=\bigcup_i\Sigma_{i,uo}$ are well defined. We write $P_i$, $R_{i,o}$, and $P_{T_i}$ for the corresponding local maps.

\begin{lemma}\label{lem:amalg}
	Let $R_i\colon\Sigma_i\to T_i$, for $i=1,\dots,n$, be relabelings with pairwise disjoint templates, let $R=\bigcup_i R_i$, and let $t'\in(\bigcup_i T_i)^*$ with $t'_i=t'|_{T_i}$.
	\begin{enumerate}
		\item If $s_i\in\Sigma_i^*$ satisfy $R_i(s_i)=t'_i$ for every $i$, then there is $s\in(\bigcup_i\Sigma_i)^*$ with $R(s)=t'$ and $s|_{\Sigma_i}=s_i$ for every $i$.
		\item If $s,\tilde s\in(\bigcup_i\Sigma_i)^*$ satisfy $R(s)=R(\tilde s)=t'$ and $s|_{\Sigma_i}=\tilde s|_{\Sigma_i}$ for every $i$, then $s=\tilde s$.
	\end{enumerate}
\end{lemma}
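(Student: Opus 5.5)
Both parts rest on~\eqref{eq:type}: for every position $j$ of $t'$, the letter $t'(j)$ lies in $T_i$ exactly for those $i$ whose alphabet $\Sigma_i$ contains an event with image $t'(j)$. By pairwise disjoint templates, if $t'(j)\in T_i\cap T_k$ for $i\ne k$, then the preimages in $\Sigma_i$ and in $\Sigma_k$ coincide, so the letter is the image of a single shared event. The plan is to read $t'$ left to right and, at each position, let the components owning that letter jointly determine the plant event.

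For (1), we build $s$ by induction on $|t'|$, or equivalently position by position. Let $I_j=\{i : t'(j)\in T_i\}$; this set is nonempty because $t'\in(\bigcup_i T_i)^*$. For each $i\in I_j$, position $j$ of $t'$ corresponds to some position $k_i(j)$ of $t'_i$, namely the number of $T_i$-letters among $t'(1),\dots,t'(j)$. Since $R_i(s_i)=t'_i$ and $R_i$ is letter-to-letter, the $k_i(j)$-th letter $c_i$ of $s_i$ satisfies $R_i(c_i)=t'(j)$. For $i,k\in I_j$ we have $R_i(c_i)=R_k(c_k)$, so $c_i=c_k$ by disjoint templates, and we put $s(j):=c_i$. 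Then $R(s)=t'$ holds letterwise. To get $s|_{\Sigma_i}=s_i$, observe that $s(j)\in\Sigma_i$ iff $R(s(j))\in T_i$ iff $i\in I_j$, again by~\eqref{eq:type}. Hence $s|_{\Sigma_i}$ collects exactly the letters chosen at the $T_i$-positions, in order, and these are $s_i$ letter by letter. Since $|t'_i|=|s_i|$, every letter of $s_i$ is used.

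For (2), compare $s(j)$ and $\tilde s(j)$ for each $j$. Both have image $t'(j)$, so by~\eqref{eq:type} both lie in $\Sigma_i$ for any fixed $i\in I_j$. Counting $T_i$-letters in the common prefix $t'(1)\cdots t'(j)$ shows that they occupy the same position $k_i(j)$ in $s|_{\Sigma_i}$ and in $\tilde s|_{\Sigma_i}$. Since these projections are equal, $s(j)=\tilde s(j)$. The lengths agree because both equal $|t'|$.

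The only delicate step is the bookkeeping that ties positions of $t'$ to positions of $t'_i$ and of $s|_{\Sigma_i}$. This works because membership of $s(j)$ in $\Sigma_i$ is decided by $t'(j)$ alone, via~\eqref{eq:type}, and that equivalence is exactly where disjoint templates, and not merely agreement on shared events, are needed.
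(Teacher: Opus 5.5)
Your proof is correct and follows the paper's argument: positionwise assembly of $s$ with consistency forced by disjoint templates, \eqref{eq:type} to match the restrictions to the $\Sigma_i$, and position counting for uniqueness. You index by the set $I_j$ of components owning position $j$, whereas the paper indexes by the sets $J_i$ of positions owned by component $i$, which is the same bookkeeping; one small correction to your closing remark is that disjoint templates are also used directly, beyond \eqref{eq:type}, in the consistency step $c_i=c_k$ of part (1). That step does not follow from \eqref{eq:type} alone, whereas part (2) needs only \eqref{eq:type}.
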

\begin{proof}
	Let $\ell=|t'|$ and write $t'=\tau_1\cdots\tau_\ell$. For every $i$, let $J_i=\{j : \tau_j\in T_i\}$, enumerated as $j_{i,1}<\cdots<j_{i,m_i}$; then $\bigcup_i J_i=\{1,\dots,\ell\}$ and $t'_i=\tau_{j_{i,1}}\cdots\tau_{j_{i,m_i}}$.

	1) Since $R_i$ is letter-to-letter, $|s_i|=m_i$; write $s_i=c_{i,1}\cdots c_{i,m_i}$, so that $R(c_{i,k})=\tau_{j_{i,k}}$. Define $s=\sigma_1\cdots\sigma_\ell$ by $\sigma_{j_{i,k}}=c_{i,k}$, which assigns a letter to every position because $\bigcup_i J_i=\{1,\dots,\ell\}$. The assignment is consistent: if $j=j_{i,k}=j_{i',k'}$ with $i\ne i'$, then $R(c_{i,k})=\tau_j=R(c_{i',k'})$ with $c_{i,k}\in\Sigma_i$ and $c_{i',k'}\in\Sigma_{i'}$, and disjointness of templates gives $c_{i,k}=c_{i',k'}$. By construction $R(s)=t'$. Finally, by~\eqref{eq:type}, $\sigma_j\in\Sigma_i$ if and only if $j\in J_i$; hence $s|_{\Sigma_i}=c_{i,1}\cdots c_{i,m_i}=s_i$.

	2) Let $R(s)=t'$. Then $|s|=\ell$ and, by~\eqref{eq:type}, the positions of $s$ carrying letters of $\Sigma_i$ are exactly those of $J_i$; consequently the $j_{i,k}$-th letter of $s$ is the $k$-th letter of $s|_{\Sigma_i}$. The same applies to $\tilde s$, with the same sets $J_i$. Let $j\in\{1,\dots,\ell\}$; then $j=j_{i,k}$ for some $i$ and $k$, and the $j$-th letters of $s$ and $\tilde s$ are the $k$-th letters of $s|_{\Sigma_i}$ and $\tilde s|_{\Sigma_i}$, which coincide. Therefore $s=\tilde s$.
\end{proof}

\begin{proposition}\label{prop:comp}
	For $i=1,\dots,n$, let $L_i\subseteq\Sigma_i^*$ be nonempty and prefix-closed, and let $R_i\colon\Sigma_i\to T_i$ be surjective relabelings preserving the observability status and having pairwise disjoint templates. Let $L=\|_{i=1}^n L_i$ and $R=\bigcup_i R_i$.
	\begin{enumerate}
		\item If $R_i$ is ROC with respect to $L_i$ and $P_i$ for every $i$, then $R$ is ROC with respect to $L$ and $P$.
		\item If the alphabets $\Sigma_1,\dots,\Sigma_n$ are pairwise disjoint, then $R$ is ROC with respect to $L$ and $P$ if and only if $R_i$ is ROC with respect to $L_i$ and $P_i$ for every $i$.
	\end{enumerate}
\end{proposition}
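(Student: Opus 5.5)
For part~1 we work with the pair formulation of Lemma~\ref{lem:pair} and show that every compatible pair of $L$ is realized. Let $(t',x)\in\Fm(L)$, so there are $s_0,s_1\in L$ with $R(s_0)=t'$, $P(s_1)=x$, and $\PT(t')=\Ro(x)$.

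The first step is to localize. Put $t'_i=t'|_{T_i}$ and $x_i=x|_{\Sigma_{i,o}}$. Projection to $\Sigma_i$ commutes with $R$ in the sense that $R_i(s|_{\Sigma_i})=R(s)|_{T_i}$; this follows from~\eqref{eq:type}, since a letter $c$ lies in $\Sigma_i$ exactly when $R(c)\in T_i$. Projection to $\Sigma_i$ also commutes with $P$, because $P(s)|_{\Sigma_{i,o}}=P_i(s|_{\Sigma_i})$. Hence $t'_i=R_i(s_0|_{\Sigma_i})\in R_i(L_i)$ and $x_i=P_i(s_1|_{\Sigma_i})\in P_i(L_i)$. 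Projecting the identity $\PT(t')=\Ro(x)$ onto $T_i$, and using~\eqref{eq:type} restricted to observable letters, gives $P_{T_i}(t'_i)=R_{i,o}(x_i)$. Thus $(t'_i,x_i)\in\Fm(L_i)$, and ROC of $R_i$ yields $s_i\in L_i$ with $R_i(s_i)=t'_i$ and $P_i(s_i)=x_i$.

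The second step is to amalgamate. Lemma~\ref{lem:amalg}(1) provides $s$ with $R(s)=t'$ and $s|_{\Sigma_i}=s_i\in L_i$ for every $i$, so $s\in L$. It remains to check $P(s)=x$, and this is the main obstacle, because observations of different components must interleave exactly as in $x$. I would argue as follows. Since $\Ro$ is letter-to-letter, the $r$-th letter of $P(s)$ and the $r$-th letter of $x$ both have $R$-image equal to the $r$-th letter of $\PT(t')$. By~\eqref{eq:type}, both letters lie in the same components $\Sigma_i$ (those with that image in $T_i$). Their positions among the $\Sigma_i$-observable letters also agree, since both are determined by the count of $T_{i,o}$-letters in the prefix of $\PT(t')$. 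Together with $P(s)|_{\Sigma_{i,o}}=P_i(s_i)=x_i=x|_{\Sigma_{i,o}}$, this forces the letters to be equal. One could also phrase this as the observable analogue of Lemma~\ref{lem:amalg}(2), applied to $P(s)$ and $x$ with $\Ro$ in place of $R$. Lemma~\ref{lem:pair} then concludes part~1.

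For part~2, sufficiency is part~1. For necessity, assume $R$ is ROC for $L$ and take $(t'_i,x_i)\in\Fm(L_i)$. Because the alphabets are disjoint, $L$ is the shuffle and the languages are nonempty, so $L_i\subseteq L$ (with $\varepsilon\in L_j$ for $j\ne i$). Hence $R_i(L_i)\subseteq R(L)$ and $P_i(L_i)\subseteq P(L)$, and $(t'_i,x_i)\in\Fm(L)$. ROC of $R$ gives $s'\in L$ with $R(s')=t'_i$ and $P(s')=x_i$. Since $t'_i\in T_i^*$, \eqref{eq:type} gives $s'\in\Sigma_i^*$. Therefore $s'=s'|_{\Sigma_i}\in L_i$, and $P_i(s')=x_i$ and $R_i(s')=t'_i$, so the pair is realized in $L_i$.
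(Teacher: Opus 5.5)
Your proposal is correct and follows the paper's proof essentially step by step. You localize the compatible pair via the restriction identities obtained from \eqref{eq:type}, take local witnesses, amalgamate them with Lemma~\ref{lem:amalg}(1), and check $P(s)=x$ by the observable-level version of Lemma~\ref{lem:amalg}(2), which your positional argument spells out; part~2 likewise goes through $L_i\subseteq L$ and \eqref{eq:type}. The one point the paper states explicitly and you use only implicitly is that the observable restrictions $R_{i,o}$ inherit agreement and template disjointness. This gives $T_i\cap\To=T_{i,o}$ and \eqref{eq:type} at the observable level, which your derivation of $P_{T_i}(t'_i)=R_{i,o}(x_i)$ and your position-counting argument rely on.
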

\begin{proof}
	Write $\Sigma=\bigcup_i\Sigma_i$ and $T=\bigcup_i T_i$. Since $R$ agrees with each $R_i$ on $\Sigma_i$, it is a well-defined relabeling, and for $s\in\Sigma^*$ and every $i$,
	\begin{equation}\label{eq:comm-restr}
		R(s)|_{T_i}=R_i(s|_{\Sigma_i}) \quad\text{and}\quad P(s)|_{\Sigma_{i,o}}=P_i(s|_{\Sigma_i}) \,,
	\end{equation}
	because a letter $c$ of $s$ contributes to $R(s)|_{T_i}$ if and only if $R(c)\in T_i$, which is if and only if $c\in\Sigma_i$ by~\eqref{eq:type}; analogously for the observation. Moreover, the observable restrictions $R_{1,o},\dots,R_{n,o}$ inherit the hypotheses: they agree on $\Sigma_{i,o}\cap\Sigma_{j,o}$ because $R_i$ and $R_j$ agree on $\Sigma_i\cap\Sigma_j$, and an identification $\Ro(c_i)=\Ro(c_j)$ with $c_i\in\Sigma_{i,o}$, $c_j\in\Sigma_{j,o}$, and $i\ne j$ is an identification under $R$ and therefore $c_i=c_j$; also $\Ro=\bigcup_i R_{i,o}$. Consequently~\eqref{eq:type} and~\eqref{eq:comm-restr} hold at the observable level as well; in particular $T_i\cap\To=T_{i,o}$, so that $\PT$ agrees with $P_{T_i}$ on $T_i^*$, and
	\begin{equation}\label{eq:comm-restr-o}
		\Ro(x)|_{T_{i,o}}=R_{i,o}(x|_{\Sigma_{i,o}}) ~\text{ and }~ \PT(t')|_{T_{i,o}}=P_{T_i}(t'|_{T_i})
	\end{equation}
	for $x\in\So^*$ and $t'\in T^*$. We use Lemma~\ref{lem:pair} in both parts.

	1) Let $(t',x)$ be a compatible pair for $L$ and $R$. Put $t'_i=t'|_{T_i}$ and $x_i=x|_{\Sigma_{i,o}}$. From $t'=R(z)$ with $z\in L$, \eqref{eq:comm-restr} gives $t'_i=R_i(z|_{\Sigma_i})\in R_i(L_i)$, and from $x=P(y)$ with $y\in L$ it gives $x_i=P_i(y|_{\Sigma_i})\in P_i(L_i)$; restricting $\Ro(x)=\PT(t')$ to $T_{i,o}$ gives $R_{i,o}(x_i)=P_{T_i}(t'_i)$ by~\eqref{eq:comm-restr-o}. Therefore $(t'_i,x_i)$ is a compatible pair for $L_i$ and $R_i$, and by assumption there are witnesses $s'_i\in L_i$ with $R_i(s'_i)=t'_i$ and $P_i(s'_i)=x_i$.

	By Lemma~\ref{lem:amalg}(1), applied to $t'$ and the witnesses, there is $s'\in\Sigma^*$ with $R(s')=t'$ and $s'|_{\Sigma_i}=s'_i\in L_i$ for every $i$; hence $s'\in L$. It remains to show that $P(s')=x$. Apply Lemma~\ref{lem:amalg}(2) to the observable restrictions, with $\PT(t')\in\To^*$ in the role of the template string and with $P(s')$ and $x$ in the role of $s$ and $\tilde s$. Its hypotheses hold: the two strings have the same $\Ro$-image $\PT(t')$, since $\Ro(P(s'))=\PT(R(s'))=\PT(t')$ by~\eqref{eq:comm} and $\Ro(x)=\PT(t')$ by compatibility; and they have the same restrictions, since $P(s')|_{\Sigma_{i,o}}=P_i(s'_i)=x_i=x|_{\Sigma_{i,o}}$. The lemma yields $P(s')=x$.

	2) Fix $i$ and let $(t'_i,x_i)$ be a compatible pair for $L_i$ and $R_i$. Every $L_j$ is nonempty and prefix-closed, so $\varepsilon\in L_j$, and disjointness of the alphabets gives $z|_{\Sigma_j}=\varepsilon\in L_j$ for every $z\in L_i$ and $j\ne i$; hence $L_i\subseteq L$, and consequently $t'_i\in R(L)$ and $x_i\in P(L)$, whereas $\Ro(x_i)=R_{i,o}(x_i)=P_{T_i}(t'_i)=\PT(t'_i)$ shows that $(t'_i,x_i)$ is compatible for $L$ and $R$. By ROC of $R$ there is $s'\in L$ with $R(s')=t'_i$ and $P(s')=x_i$. Now $t'_i\in T_i^*$, so every letter $c$ of $s'$ satisfies $R(c)\in T_i$ and hence lies in $\Sigma_i$ by~\eqref{eq:type}; thus $s'\in\Sigma_i^*$ and $s'=s'|_{\Sigma_i}\in L_i$, and $P_i(s')=P(s')=x_i$. Together with part 1, the equivalence follows.
\end{proof}

	The hypotheses are tight in both parts. Template disjointness cannot be dropped from part 1, even for disjoint alphabets: Example~\ref{ex:sef} below is an instance of two components with disjoint alphabets, each ROC, whose shuffle is not ROC because the private events $a_1,a_2$ share the template image $\hat a$.

	Nor can template disjointness be weakened to the local-status condition~\eqref{eq:type}, although the two differ only slightly. Condition~\eqref{eq:type} holds if and only if any two events with a common template image belong to the same components, since $R(c)=R(c')$ gives $c\in\Sigma_i\iff R(c)\in T_i\iff R(c')\in T_i\iff c'\in\Sigma_i$ for every $i$; template disjointness requires in addition that the two events be equal. The weakening therefore permits two \emph{distinct shared} events to carry a common template image, and that alone destroys part 1. Take $\Sigma_1=\Sigma_2=\{a,b,u_1,u_2\}$ with $\So=\{a,b\}$, and let $R(a)=R(b)=\hat a$ and $R(u_1)=R(u_2)=\hat u$; condition~\eqref{eq:type} holds trivially, whereas $a$ and $b$ violate template disjointness. Put
	\[
		L_1=\overline{\{u_1a,\,u_1b,\,a,\,b\}},
		\qquad
		L_2=\overline{\{u_1a,\,u_2b,\,a,\,b\}} \,.
	\]
	Both are ROC: in $L_1$ the compatible pair $(\hat u\hat a,b)$ is realized by $u_1b$, and in $L_2$ by $u_2b$; the remaining pairs are realized by $a$, $b$, $u_1$, and $u_1a$ in either language. Their composition is their intersection, $L=\overline{\{u_1a,\,a,\,b\}}$, in which $(\hat u\hat a,b)$ is still compatible---$\hat u\hat a\in R(L)$ via $u_1a$ and $b\in P(L)$ via $b$---but no longer realized, since a witness would have to be $u_1b$ or $u_2b$ and the two components have kept different ones. The failure is exactly the one that Lemma~\ref{lem:amalg} rules out: the components choose different preimages of the same unobservable template event, and no string of the composition restricts to both choices.

	Alphabet disjointness cannot be dropped from part 2: with shared events, the composition may prune violating behaviors of a component. Let $L_1=\overline{sa}\cup\overline{sub}$ over $\Sigma_1=\{s,a,b,u\}$ with $u$ unobservable, $R_1(s)=\hat s$, $R_1(a)=R_1(b)=\hat a$, and $R_1(u)=\hat u$. The compatible pair $(\hat s\hat a,\,sb)$, with $\hat s\hat a\in R_1(L_1)$ via $sa$ and $sb\in P_1(L_1)$ via $sub$, is not realized, since a witness must be the observable string $sb\notin L_1$; hence $R_1$ is not ROC with respect to $L_1$. Take $L_2=\{\varepsilon\}$ over $\Sigma_2=\{s\}$ with $R_2(s)=\hat s$. The templates are disjoint, yet component~2 blocks the shared event $s$, so $L=L_1\|L_2=\{\varepsilon\}$ and $R=R_1\cup R_2$ is trivially ROC.

	In the framework of Liu \emph{et al.}~\cite{LCL19}, where the alphabets and the templates of different groups are disjoint, ROC of the multi-agent plant is therefore \emph{equivalent} to ROC of the individual groups. In the framework of Liu \emph{et al.}~\cite{LKL22}, where different groups may share events, part 1 reduces the \emph{confirmation} of ROC to the groups provided that the group relabelings have pairwise disjoint templates---the local-status condition~\eqref{eq:type} alone does not suffice, as the counterexample above shows---although the failure of ROC in a group is then inconclusive for the composition. Verification thus reduces to the individual groups, where it may be settled by Proposition~\ref{prop:trivial-roc} or Proposition~\ref{prop:ptime}, by saturation, or by the procedure of Theorem~\ref{thm:upper}. The decomposition applies even though the synthesis remains global at the template level: ROC is a property of the plant and the relabeling, into which the specification does not enter.

\begin{remark}\label{rem:withingroup}
	The decomposition of Proposition~\ref{prop:comp} operates \emph{across} groups. It does not apply \emph{within} a group, because the agents of one group share a template by construction, so the hypothesis of pairwise disjoint templates fails---as Example~\ref{ex:sef} shows in the sharpest possible way. Verifying ROC of a group with $n_i$ agents therefore still takes place on the synchronous product of the $n_i$ agents, whose size is exponential in $n_i$.
\end{remark}

\section{A Refutation and a Repair}\label{sec:app-ma}
	One structural condition was proposed for the multi-agent setting: if the event sets of the agents inside each group are pairwise disjoint (share-event-free), then, according to~\cite[Prop.~2]{LKL22}, the global plant is ROC by construction, and moreover $L(\mathbf{M})\subseteq R(L(\mathbf{G}))$. The first assertion is incorrect.

\begin{figure}[t]
\centering
\begin{tikzpicture}[node distance=1.25cm,
    every state/.style={minimum size=5.5mm, inner sep=1pt},
    >={Stealth[length=2mm]}, initial text={}]
  \begin{scope}
    \node[state, initial] (a0) {};
    \node[state, right=of a0] (a1) {};
    \node[state, right=of a1] (a2) {};
    \node[state, right=of a2] (a3) {};
    \path[->] (a0) edge node[above] {$u_1$} (a1)
              (a1) edge node[above] {$a_1$} (a2)
              (a2) edge node[above] {$a_1$} (a3);
    \node[left=0.55cm of a0] {$G_1$};
  \end{scope}
  \begin{scope}[yshift=-1.15cm]
    \node[state, initial] (b0) {};
    \node[state, right=of b0] (b1) {};
    \node[state, right=of b1] (b2) {};
    \node[state, right=of b2] (b3) {};
    \path[->] (b0) edge node[above] {$u_2$} (b1)
              (b1) edge node[above] {$a_2$} (b2)
              (b2) edge node[above] {$a_2$} (b3);
    \node[left=0.55cm of b0] {$G_2$};
  \end{scope}
  \begin{scope}[yshift=-2.30cm]
    \node[state, initial] (h0) {};
    \node[state, right=of h0] (h1) {};
    \node[state, right=of h1] (h2) {};
    \node[state, right=of h2] (h3) {};
    \path[->] (h0) edge node[above] {$\hat u$} (h1)
              (h1) edge node[above] {$\hat a$} (h2)
              (h2) edge node[above] {$\hat a$} (h3);
    \node[left=0.55cm of h0] {$H$};
  \end{scope}
\end{tikzpicture}

\begin{tabular}{@{}ll@{}}
  \toprule
  \multicolumn{2}{@{}l}{compatible pair $(t',x)=(\hat u\hat a\hat a,\ a_1a_2)$}\\
  \midrule
  $t'\in R(L)$ & realized by $u_1a_1a_1$\\
  $x\in P(L)$  & realized by $u_1a_1u_2a_2$\\
  \midrule
  \multicolumn{2}{@{}l}{but no single string realizes both:}\\
  candidate $u_1a_1a_2$ & $(u_1a_1a_2)|_{\Sigma_2}=a_2\notin L(G_2)$\\
  candidate $u_2a_1a_2$ & $(u_2a_1a_2)|_{\Sigma_1}=a_1\notin L(G_1)$\\
  \bottomrule
\end{tabular}
\caption{The two isomorphic machines $G_1,G_2$ of Example~\ref{ex:sef} and their common template $H$. The events $u_1,u_2$ are unobservable and $a_1,a_2$ observable, with $R(u_i)=\hat u$ and $R(a_i)=\hat a$; all states are final. The pair $(\hat u\hat a\hat a,\,a_1a_2)$ is compatible, but a witness would have to be $u_ja_1a_2$ for some $j$, and either choice leaves one machine processing a workpiece it was never set up for. The group is share-event-free, so this contradicts~\cite[Prop.~2]{LKL22}.}
\label{fig:sef}
\end{figure}

\begin{example}\label{ex:sef}
	Consider one group of two isomorphic machines with disjoint alphabets, shown in Fig.~\ref{fig:sef}. Machine $i$ is set up by the unobservable event $u_i$ and then processes up to two workpieces, each by the observable event $a_i$. Its behavior is $\overline{u_i a_i a_i}$, an isomorphic copy of the template behavior $\overline{\hat u\hat a\hat a}$ under $R(a_i)=\hat a$ and $R(u_i)=\hat u$. The plant $L$ is the shuffle of the two behaviors. The pair $(\hat u\hat a\hat a,\,a_1a_2)$ is compatible: $\hat u\hat a\hat a\in R(L)$ via $u_1a_1a_1$ and $a_1a_2\in P(L)$ via $u_1a_1u_2a_2$. However, it is not realized; indeed, a witness must be of the form $u_ja_1a_2$, but $u_1a_1a_2$ restricts machine~2 to $a_2$, a job without a setup, and $u_2a_1a_2$ restricts machine~1 to $a_1$. Neither restriction is in the corresponding behavior. The single unobservable setup prescribed by the template cannot serve both machines, and $R$ is not ROC with respect to $L$ and $P$, although the group is share-event-free.
	\hfill$\diamond$
\end{example}

	The small factory illustrating the approach in~\cite{LKL22} violates ROC as well. The template string $\mathit{start}\cdot\mathit{breakdown}\cdot\mathit{start}\cdot\mathit{finish}$ of the input group, paired with the observation in which the second input machine starts, then the first input machine starts, and then the second machine finishes, is compatible but not realized: a breakdown occurs only after a start and prevents finishing until a repair, so the unobservable breakdown can be attributed neither to the first machine, which has not started yet, nor to the second machine, which subsequently finishes without being repaired.

\subsubsection*{Scope of the refutation, and a repair}
	Two remarks on the scope. First, the refutation is independent of how the template plant $\mathbf{M}$ is formed, that is, of the number of agents entering it: by Definition~\ref{def:roc}, ROC is a property of the plant, the relabeling, and the observation alone, with $t'$ ranging over all of $R(L(\mathbf{G}))$---in this form ROC is used in the proof of~\cite[Thm.~2]{LKL22}. Second, ROC is a sufficient condition: its failure invalidates the maximal-permissiveness \emph{guarantee} of~\cite[Thm.~2]{LKL22}, but does not imply that the scalable supervisor is suboptimal.

	The second assertion of~\cite[Prop.~2]{LKL22}, the inclusion $L(\mathbf{M})\subseteq R(L(\mathbf{G}))$, is untouched by the refutation---but its published proof is also not sound: \cite[Lem.~9]{LKL21} argues via $R(\bigcap_i P_i^{-1}(s_i))=\bigcap_i R(P_i^{-1}(s_i))$, that is, it lets a noninjective morphism commute with intersection, which holds only as ``$\subseteq$''. Lemma~\ref{lem:amalg} repairs the argument, and Corollary~\ref{cor:LMinRLG} states the inclusion with the hypotheses it needs.

\begin{corollary}\label{cor:LMinRLG}
	Let $R_i\colon\Sigma_i\to T_i$ be relabelings with pairwise disjoint templates, where $\Sigma_i=\bigcup_{j\le n_i}\Sigma_{ij}$ and the agent alphabets $\Sigma_{ij}$ within each group are pairwise disjoint, and let every $L(G_{ij})$ be nonempty. Let $\mathbf{G}=\|_{i}\|_{j\le n_i} G_{ij}$, and let $\mathbf{M}=\|_i M_i$, where $M_i=R_i(\|_{j\le k_i}G_{ij})$ with $k_i\le n_i$ is taken as a language over the full template alphabet $T_i$. Then $L(\mathbf{M})\subseteq R(L(\mathbf{G}))$.
\end{corollary}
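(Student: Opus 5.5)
Take $t\in L(\mathbf{M})$; the goal is to construct $s\in L(\mathbf{G})$ with $R(s)=t$. The plan is to build local preimages group by group and then glue them with Lemma~\ref{lem:amalg}(1). The argument uses the disjoint-templates hypothesis across groups, and disjointness of agent alphabets together with nonemptiness within a group.

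\textbf{Step 1: decompose by groups.} Since $\mathbf{M}=\|_i M_i$ with $M_i\subseteq T_i^*$, we have $t_i:=t|_{T_i}\in M_i$ for every $i$. By definition of $M_i$, there is $z_i\in\|_{j\le k_i}L(G_{ij})$ with $R_i(z_i)=t_i$.

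\textbf{Step 2: lift from the template plant to the full group.} We want $z_i\in\|_{j\le n_i}L(G_{ij})$, that is, the agents $j>k_i$ must be accounted for. The alphabets $\Sigma_{ij}$ are pairwise disjoint, so $z_i\in(\bigcup_{j\le k_i}\Sigma_{ij})^*$ satisfies $z_i|_{\Sigma_{ij}}=\varepsilon$ for $j>k_i$. Each $L(G_{ij})$ is nonempty and prefix-closed, since all plants in the paper are, so $\varepsilon\in L(G_{ij})$. Hence $z_i\in\|_{j\le n_i}L(G_{ij})=:L_i$.

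One subtlety here is that $M_i$ is taken over the full alphabet $T_i$. This matters for the synchronous product $\|_i M_i$, but not for the lifting: we only need $R_i(z_i)=t_i$, and this holds with $R_i$ viewed as a map $\Sigma_i\to T_i$. The argument does not need the images of the agents $j\le k_i$ to cover $T_i$.

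\textbf{Step 3: amalgamate.} Apply Lemma~\ref{lem:amalg}(1) to $t$ and the strings $s_i:=z_i$, which satisfy $R_i(s_i)=t|_{T_i}$. The relabelings $R_i$ have pairwise disjoint templates. The lemma also needs every position of $t$ to lie in some $T_i$; this holds because $t$ is a string over $\bigcup_i T_i$. The lemma yields $s$ with $R(s)=t$ and $s|_{\Sigma_i}=z_i\in L_i$ for every $i$. Then, for every agent $ij$, $s|_{\Sigma_{ij}}=(s|_{\Sigma_i})|_{\Sigma_{ij}}=z_i|_{\Sigma_{ij}}\in L(G_{ij})$, using $\Sigma_{ij}\subseteq\Sigma_i$. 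Therefore $s\in L(\mathbf{G})$ and $t=R(s)\in R(L(\mathbf{G}))$.

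\textbf{Main obstacle.} The step I expect to need the most care is Step 3. This is exactly where the published argument commuted $R$ with an intersection. The point is that the groups may share events, so the $z_i$ must be glued consistently. Lemma~\ref{lem:amalg}(1) does this: template disjointness makes the letter assigned at a shared position unambiguous. By~\eqref{eq:type}, the restriction of the glued string to $\Sigma_i$ is exactly $z_i$, with no spurious letters from other groups.

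A secondary point is Step 2. Within a group, the agent alphabets are disjoint and no events are shared, so the extension by $\varepsilon$ needs only nonemptiness and prefix-closedness. This should be stated explicitly as the reason for those hypotheses in the corollary.
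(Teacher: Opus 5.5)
Your proof is correct and follows the paper's argument step for step: you restrict $t$ to each $T_i$ to obtain a local preimage in $\|_{j\le k_i}L(G_{ij})$, extend it to the whole group using $\varepsilon\in L(G_{ij})$ and the disjointness of the agent alphabets, and glue the pieces with Lemma~\ref{lem:amalg}(1). Your explicit check that $s|_{\Sigma_{ij}}=z_i|_{\Sigma_{ij}}\in L(G_{ij})$ only spells out a step the paper leaves implicit when it concludes $s\in L(\mathbf{G})$.
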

\begin{proof}
	Let $t\in L(\mathbf{M})$ and put $t_i=t|_{T_i}$; since $M_i$ is declared over $T_i$, we have $t_i\in R_i(\|_{j\le k_i}L(G_{ij}))$, so there is $s_i\in\|_{j\le k_i}L(G_{ij})$ with $R_i(s_i)=t_i$. The languages $L(G_{ij})$ are nonempty and prefix-closed, so $\varepsilon\in L(G_{ij})$; since the agents inside a group have pairwise disjoint alphabets, $s_i|_{\Sigma_{ij}}=\varepsilon\in L(G_{ij})$ for $j>k_i$, and hence $s_i\in\|_{j\le n_i}L(G_{ij})$. By Lemma~\ref{lem:amalg}(1), there is $s$ with $R(s)=t$ and $s|_{\Sigma_i}=s_i$ for every $i$, and hence $s\in L(\mathbf{G})$ and $t\in R(L(\mathbf{G}))$.
\end{proof}

	The hypotheses cannot be dropped. Without the disjointness of the agent alphabets within a group, one group of two agents over the same alphabet $\{c\}$ with $R_1(c)=\hat c$, $L(G_{11})=\overline{\{c\}}$, $L(G_{12})=\{\varepsilon\}$, and $k_1=1$ yields $\hat c\in L(\mathbf{M})$, whereas $L(\mathbf{G})=\{\varepsilon\}$. The convention that $M_i$ is a language over the full alphabet $T_i$ matters as soon as different groups share events: if $M_i$ is taken over the image alphabet $R_i(\bigcup_{j\le k_i}\Sigma_{ij})$ instead, then for $L(G_{11})=\overline{\{x\}}$, $L(G_{12})=\{\varepsilon\}$ over $\Sigma_{12}=\{s\}$, and $L(G_{21})=\overline{\{s\}}$, with $R_1(x)=\hat x$, $R_1(s)=R_2(s)=\hat s$, and $k_1=k_2=1$, the shuffle $R_1(\overline{\{x\}})\,\|\,R_2(\overline{\{s\}})$ of languages over $\{\hat x\}$ and $\{\hat s\}$ contains $\hat s$, whereas $G_{12}$ blocks $s$, so that $L(\mathbf{G})=\overline{\{x\}}$ and $\hat s\notin R(L(\mathbf{G}))$.

\begin{remark}\label{rem:vacuity}
	By Proposition~\ref{prop:trivial-roc}, within a group of at least two agents having observable events, $R$ is never injective on $\So$, so in the presence of unobservable events ROC never holds ``for free''; Example~\ref{ex:sef} and the small factory show that it frequently fails. The positive results of Section~\ref{sec:suff} delimit when it does hold: saturation is precisely the statement that the instantiations of an observable template event are interchangeable in the plant, which is the case for pools of identical, mutually substitutable machines and fails as soon as an unobservable event individualizes an agent, as the setup event does in Example~\ref{ex:sef}.
\end{remark}

\section{Conclusion}\label{sec:concl}
	Relabeling observation consistency is decidable, and verifying it is \PSPACE-complete, already for deterministic plants over alphabets of the smallest possible sizes. Decidability rests on a single structural fact: because a relabeling renames events but erases none, a compatible pair of an observation and a template string has exactly one code over the pairing alphabet, and the condition becomes an inclusion of two nondeterministic automata of quadratic size. The same fact fails for the projection abstractions of hierarchical control, where the corresponding condition is undecidable~\cite{MOC}; the difference is the erasure, not the loss of the identity of events that renaming performs.

	Since the worst case is intractable, the practical value lies in the positive results. In the presence of unobservable events, injectivity of the relabeling on observable events characterizes the relabelings for which ROC holds for every plant, and this injectivity never holds within a group of at least two agents with observable events, so ROC is never automatic in the intended application. Saturation, simulation, and the polynomial-time fragment for deterministic plants with an injective relabeling on unobservable events give tests that do apply, and compositionality reduces the verification of a multi-agent plant with groupwise disjoint templates to its individual groups.

	Finally, the structural condition proposed in the literature to guarantee ROC does not hold, and it fails already for a group of two identical machines each of which is set up by one unobservable event. What fails is exactly what saturation asks for: an unobservable event that individualizes an agent cannot be exchanged for its copy in another agent.

\bibliographystyle{IEEEtran}
\bibliography{roc}

\end{document}